\documentclass[11pt,twocolumn]{article}
\usepackage[margin=0.75in,columnsep=0.25in]{geometry}
\usepackage{booktabs}
\usepackage{amsmath,amsthm,amssymb}
\usepackage{mathtools}
\usepackage{natbib}
\usepackage[hidelinks]{hyperref}
\usepackage{graphicx}
\usepackage{placeins}

\newtheorem{assumption}{Assumption}
\newtheorem{theorem}{Theorem}
\newtheorem{proposition}[theorem]{Proposition}
\newtheorem{corollary}[theorem]{Corollary}

\theoremstyle{remark}

\newcommand{\tS}{\tau_{S}}
\newcommand{\tY}{\tau_{Y}}
\newcommand{\tSi}{\tau_{S,i}}
\newcommand{\tYi}{\tau_{Y,i}}
\newcommand{\Om}{\Omega}
\newcommand{\Oss}{\Omega_{SS}}
\newcommand{\Osy}{\Omega_{SY}}
\newcommand{\Oyy}{\Omega_{YY}}
\newcommand{\RC}{R^2_{\mathcal{C}}}
\newcommand{\E}{\mathbb{E}}
\newcommand{\Prob}{\mathbb{P}}
\newcommand{\Var}{\mathrm{Var}}
\newcommand{\Cov}{\mathrm{Cov}}
\newcommand{\Corr}{\mathrm{Corr}}

\setcitestyle{numbers,square,comma}

\begin{document}

\title{When Can You Ship on Evals Alone? Trial-Level Surrogacy for Offline
Evaluation of LLM Systems}
\author{M\r{a}rten Schultzberg \\ Spotify \\ Stockholm, Sweden \\ \texttt{mschultzberg@spotify.com}}
\date{}

\makeatletter
\twocolumn[{%
\begin{@twocolumnfalse}
\maketitle
\begin{abstract}
\noindent
Offline evals increasingly guide changes to LLM-powered products, but showing that an eval agrees with human judgment does not tell a team which changes it can ship without an A/B test. That decision depends on whether eval treatment effects predict online treatment effects across a class of interventions. We call this requirement trial-level surrogacy and show that it is logically independent of unit-level (Prentice) surrogacy. In historical launch logs, sampling noise in both effect estimates attenuates the observed relationship. Under a bivariate normal working model, subtracting the reported within-intervention variances recovers the between-intervention covariance, and the corrected moments give a closed-form ship/test/kill rule that bounds the probability of shipping a change with a non-positive online effect. LLM judges add a separate distortion. When the judge is equally accurate on both arms, eval effects are only rescaled, but arm-dependent accuracy can reverse their sign. Once deployed, the eval-to-outcome relationship can no longer be observed in the ship and kill regions, so drift there becomes undetectable without randomized audit experiments. On 183 interventions from the Upworthy Research Archive, the pooled eval-to-outcome relationship is weak despite precise measurement on both sides, and the illustrative gate sends every intervention to an A/B test.
\end{abstract}
\vspace{1.5em}
\end{@twocolumnfalse}
}]
\makeatother

\section{Introduction}
\label{sec:intro}

Offline evals are increasingly used to iterate on LLM-powered products, but a good eval is not automatically a sufficient basis for shipping. Standard validation asks whether judge outputs agree with human labels or whether benchmark scores agree with external assessments across systems \cite{shankar2024,husainshankar2026,anthropic2026,zheng2023,jimenez2024,yao2024,arora2025}. A shipping decision asks a different question. After a prompt revision, retrieval change, ranking adjustment, or model swap, does the \emph{change} in the offline eval predict the \emph{change} in the online outcome? Recent production work has begun to compare LLM-judge eval deltas with online business outcomes across deployed variants \cite{gupta2026}. We study how to construct a gate that decides, based on offline evals, for each change, whether to ship it, kill it, or send it to an A/B test. The statistical object such a gate needs is the relationship between treatment effects across interventions, rather than agreement at the output or system level.

We therefore take the intervention as the unit of analysis. Intervention $i$ has a true eval effect $\tSi$ and a true online effect $\tYi$. Within a prespecified intervention class $\mathcal{C}$, the eval is useful for shipping only insofar as its effects predict online effects across interventions. This is trial-level surrogacy in the sense of the clinical-trials literature \cite{danielshughes1997,buyse2000}, with each intervention in the role of a trial. Say a product team wants to limit the risk of harm to its online metrics when it makes decisions based on offline eval deltas. For a new intervention it observes a noisy eval delta $X$ but not $\tY$, so it needs the probability that the online effect has the wrong sign given that delta, $\Prob(\tY\le0\mid X=x)$ for a ship and $\Prob(\tY>0\mid X=x)$ for a kill. These risks depend on the intervention class, the eval, and the online outcome. They are not an intrinsic property of the eval, and learning them requires that past interventions and the new one come from the same class. Proposition~\ref{prop:levels} shows why conventional notions of eval validity do not answer this question. Unit-level Prentice surrogacy can hold while the eval effect says nothing about the sign of the online effect, and the reverse can also occur.

\paragraph{Contributions.}
The paper makes three connected contributions. First, it states what any rule that ships or kills on eval evidence alone must learn from past launches. The rule needs the relationship between eval treatment effects and online treatment effects across a class of comparable interventions, and both measurement noise and LLM-judge error interfere with learning that relationship. Second, it shows that once a gate starts shipping changes on eval evidence alone, the eval-to-outcome relationship can no longer be observed in the ship and kill regions, and randomized audit experiments restore observability. Third, under simplifying assumptions, it works through the estimation of the gate end to end, to build intuition for what each assumption buys and where bias enters, and derives a closed-form ship/test/kill rule.

\paragraph{Relation to prior work.}
The distinction between individual-level and trial-level surrogate validity is established in the clinical-trials literature. Biomarkers that correlate with survival across patients need not predict treatment effects on survival \cite{prentice1989,buyse1998,buyse2000}, and intuitive surrogate criteria can admit the surrogate paradox \cite{chengengjia2007,vanderweele2013}. \citet{atheychetty2019} combine short-term proxies into a surrogate index at the unit level, while \citet{persson2026} study LLMs as synthetic experimental participants. Our target is instead whether eval treatment effects predict online treatment effects across a corpus of product interventions.

To estimate the eval-to-outcome relationship from noisy launch logs, we subtract the reported within-intervention measurement variances from the sample covariance. This correction follows the random-effects and errors-in-variables tradition. Subtracting within-study variances goes back to \citet{dersimonian1986}, with bivariate and multivariate extensions in \citet{danielshughes1997,riley2008,jackson2011}. Related algebra appears in attenuation correction \cite{spearman1904}, misclassification correction \cite{rogangladen1978}, and errors-in-variables regression \cite{fuller1987,carroll2006}. \citet{tripuraneni2024} study noise-aware proxy metric selection from A/B-test corpora \cite{deng2016,kohavi2020}. Supplementary analyses connect the selected-corpus problem to Pearson--Lawley selection formulas \cite{lawley1943,muthen1989,muthen2015}, the closed-loop problem to selective labels \cite{lakkaraju2017}, and randomized audits to the logic of long-term holdouts and exploit--explore designs \cite{hohnhold2015,sutton2018}.

\section{Trial-Level Surrogacy and Eval-Only Decisions}
\label{sec:setup}

\subsection{What the shipping decision needs}
The shipping decision concerns two intervention-level quantities, the effect on the eval and the effect on the online outcome. If the eval carries no information about the online effect, it cannot help with the decision. If it does, the relevant object is the conditional distribution of the unobserved online effect given the eval delta we actually observe.

Throughout, the unit of analysis is an intervention to an AI system. Intervention $i$ has two true effects. The \emph{online effect} $\tYi$ is the average treatment effect on the primary user outcome (e.g.\ engagement, revenue, resolution rate) that a sufficiently large A/B test would measure. The \emph{eval effect} $\tSi$ is defined through the eval suite. Tasks $t$ are drawn from a task distribution $\mathcal{Q}$, the system's output on task $t$ under arm $a\in\{0,1\}$ has latent binary quality $Z_a(t)\in\{0,1\}$, and
$\tSi=\E_{\mathcal{Q}}[Z_1(t)-Z_0(t)]$. The framework generalizes to a vector of $K$ graders with finite-variance scores using a surrogate index \cite{atheychetty2019}. As a running example, consider an LLM support agent. The eval judges whether each sampled reply resolves the ticket correctly under policy, while the online outcome is true resolution without hand-off to a human or re-contact within seven days.

For a new intervention from class $\mathcal{C}$, the team runs the eval at a given precision and observes the delta $X=\tS+\epsilon$, where $\epsilon$ is finite-sample error with variance $\sigma_S^2$, and does not observe $\tY$. Shipping is a mistake if $\tY\le0$, and killing is a mistake if $\tY>0$. Given what the team sees, the probabilities of these mistakes are
\begin{equation}
\Prob(\tY\le0\mid X=x,\mathcal{C})
\quad\text{and}\quad
\Prob(\tY>0\mid X=x,\mathcal{C}).
\label{eq:target}
\end{equation}
These are the risks a gate, a rule that routes each change to ship, kill, or A/B test, has to control. We ship only where the first probability is at most $\alpha$, kill only where the second is at most $\alpha$, and test otherwise. The safe regions need not be intervals in general. The threshold form used later comes from the monotone working model in Section~\ref{sec:model}. Neither probability is observed for a new intervention, so both have to be learned from earlier interventions in the same class. We suppress $\mathcal{C}$ below when the class is fixed.

Standard eval validation produces evidence at other levels. \emph{Output-level} validity asks whether judge labels agree with human labels on individual model outputs. \emph{Cross-system} validity asks whether the eval ranks a set of heterogeneous models the same way human judgments do. Neither links intervention-level changes in the eval to changes in the online outcome. A third concept, \emph{unit-level} surrogacy in the sense of Prentice \cite{prentice1989}, asks whether the outcome depends on treatment only through a surrogate measured on the same observational unit. In the ordinary offline-eval setting considered here, this condition cannot be naturally assessed. The eval scores model outputs while the online metric measures user behavior, and there is no shared unit on which both are observed.

Trial-level surrogacy operates on the objects the decision actually concerns. Both effects are defined at the intervention level, so no shared observational unit is required. The question is simply whether eval effects predict online effects across interventions in the class.  Table~\ref{tab:levels} summarizes the four validity claims.

\begin{table}[tb]
\caption{Four validity claims relevant to an eval suite. Standard eval
validation typically targets the first two. The shipping decision requires
the fourth.}
\label{tab:levels}
\footnotesize
\setlength{\tabcolsep}{4pt}
\begin{tabular*}{\columnwidth}{@{\extracolsep{\fill}}lll@{}}
\toprule
Level & Unit & Typical evidence \\
\midrule
Output-level & model output & judge--human agreement \\
Cross-system & model & benchmark vs.\ preference \\
Unit-level & user within trial & joint $S,Y$ measurement \\
Trial-level & intervention & corpus of paired deltas \\
\bottomrule
\end{tabular*}
\end{table}

\begin{proposition}[Independence of unit-level and trial-level surrogacy]
\label{prop:levels}
(i) There exists a data-generating process in which, within every intervention
$i$, the Prentice conditions hold exactly (in particular
$Y\perp\!\!\!\perp T\mid S$ at the unit level), yet $\tY$ is independent of
$\tS$ across interventions, so that $\Prob(\tY\le0\mid\tS=s)=\Prob(\tY\le0)$ for
every $s$. (ii) There exists a data-generating process in which $\tYi=\tSi$
for every intervention, so that the eval effect determines the sign of the
online effect, yet the Prentice conditions fail within every intervention.
\end{proposition}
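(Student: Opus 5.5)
Both parts are existence claims, so the plan is to write down two explicit data-generating processes and check them directly. Each process has an intervention-level random component, which controls the trial-level behavior, and a unit-level structure, which controls the Prentice conditions. Within intervention $i$, let $T\in\{0,1\}$ be the randomized arm, $S$ a unit-level surrogate and $Y$ the unit-level outcome. Then $\tSi=\E[S\mid T=1]-\E[S\mid T=0]$ and $\tYi=\E[Y\mid T=1]-\E[Y\mid T=0]$. I will use the Prentice conditions in their usual form: $T$ affects $S$, $S$ is prognostic for $Y$, and $Y\perp\!\!\!\perp T\mid S$.

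For (i), draw intervention-specific parameters independently: $a_i\sim N(0,1)$, $b_i\sim N(0,1)$, $b_i\perp\!\!\!\perp a_i$, with $a_i\neq0$ almost surely. Within intervention $i$, set
\[
S=a_iT+\eta,\qquad Y=\frac{b_i}{a_i}S+\xi ,
\]
with $\eta,\xi$ independent standard normal noise that is independent of $T$.
\begin{itemize}
\item Prentice: $Y$ depends on $T$ only through $S$, so $Y\perp\!\!\!\perp T\mid S$ holds exactly. $T$ affects $S$ because $a_i\neq0$. $S$ is prognostic whenever $b_i\neq0$, which holds almost surely.
\item Effects: $\tSi=a_i$ and $\tYi=(b_i/a_i)\,a_i=b_i$.
\item Trial level: $a_i\perp\!\!\!\perp b_i$, so $\tY$ is independent of $\tS$ and $\Prob(\tY\le0\mid\tS=s)=\Prob(\tY\le0)=1/2$ for every $s$.
\end{itemize}
The point is that the slope linking $S$ to $Y$ varies across interventions in a way that scrambles the trial-level association.

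For (ii), draw $\theta_i\sim N(0,1)$ and set, within intervention $i$,
\[
S=\theta_iT+\eta,\qquad Y=\theta_iT+\xi ,
\]
with $\eta,\xi$ independent noise that is independent of $T$.
\begin{itemize}
\item Effects: $\tSi=\tYi=\theta_i$ for every $i$, so the eval effect determines the sign of the online effect.
\item Prentice fails: given $S$, the distribution of $Y$ still depends on $T$. Concretely, $\E[Y\mid S,T]=\theta_iT$, so $Y\not\perp\!\!\!\perp T\mid S$ whenever $\theta_i\neq0$, which holds almost surely. $S$ is also not prognostic for $Y$, since $\eta\perp\!\!\!\perp\xi$.
\end{itemize}
If the paper wants failure in literally every intervention, I replace $\theta_i$ with a distribution bounded away from zero, for example $\theta_i=\pm(1+|U_i|)$.

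The main obstacle is interpretive rather than technical. The paper's eval effect is defined on tasks, not users, so I should state that for the purposes of the proposition we posit a shared unit carrying both $S$ and $Y$. This matches the remark that Prentice conditions need such a unit. I also need to make sure the conditional-independence claims hold for the full conditional distribution, not only the mean. Both constructions use additive noise independent of $T$, so they do.
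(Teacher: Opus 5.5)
Your constructions are the paper's own: in (i) the intervention-specific slope $b_i/a_i$ is exactly its device $b_i=c_i/\delta_i$, and (ii) is identical, including your fix of bounding $\theta_i$ away from zero. The paper takes $\delta_i$ uniform on $\{1,2\}$ and $c_i$ uniform on $\{-1,1\}$, which makes ``every intervention'' and ``every $s$'' hold literally rather than almost surely, so bound $a_i$ away from zero in (i) as well.

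Two small repairs are still needed. First, in (ii) you must specify full-support noise for $\eta$ (the paper uses nondegenerate normals), because passing from $\E[Y\mid S,T]=\theta_iT$ to failure of $Y\perp\!\!\!\perp T\mid S$ requires the two arms' distributions of $S$ to overlap; if $S$ revealed $T$, the conditional independence would hold trivially. Second, drop the aside that $S$ is not prognostic for $Y$ in (ii): it is false under the usual marginal form of Prentice's criterion, since $S$ and $Y$ are correlated through $T$, although nothing in your argument depends on it.
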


\noindent\emph{Proof.} See Appendix~\ref{app:levels}.

The two notions can therefore disagree completely. Mediation within every intervention does not make eval effects predict online effects across interventions, and eval effects can determine online effects even when Prentice's condition fails.

\subsection{Learning from past interventions}
To see why a corpus of past interventions is needed, suppose an oracle supplied, for the new intervention, the distribution of its online effect given its eval delta. The eval alone would then be sufficient, and one could compute the tail probability and decide. In practice this distribution is unknown and must be estimated. A single past intervention yields one eval delta and one outcome delta, which cannot separate the stable relationship of the class from the idiosyncratic draw of that intervention. Multiple past interventions reveal the scatter. Tight scatter after accounting for measurement error indicates a stable conditional that can be transported to a new intervention. Wide scatter indicates that a single pair would have produced a misleading estimate.

A corpus is useful only if the interventions in it share the same eval-to-outcome relationship and the next intervention follows it too. Write $\theta_i=(\tSi,\tYi)^{\top}$ for the true effects of intervention $i$.

\begin{assumption}[Exchangeable corpus]\label{a:corpus}
Within an intervention class $\mathcal{C}$, $\theta_i$ are i.i.d.\ draws from a
distribution $P_{\mathcal{C}}$. The intervention to be decided is a further
independent draw from $P_{\mathcal{C}}$.
\end{assumption}

Assumption~\ref{a:corpus} says that the past interventions in the corpus and the new intervention are draws from the same population of changes. Whatever relationship between eval and online effects holds in the corpus therefore also holds for the next intervention. It is the strongest assumption in the paper, because it requires both that the historical interventions share a common eval-to-outcome relationship and that the next intervention follows the same one. 

\subsection{Measurement noise enters twice}
\label{sec:noise}
The decision target \eqref{eq:target} involves the true eval and online effects, but neither is observed directly. Eval deltas are noisy because the task set and per-task trials are finite, and online deltas are noisy because each A/B test is finite. For each of the $n$ interventions in the historical corpus we observe estimates $\hat\theta_i=(\hat\tau_{S,i},\hat\tau_{Y,i})^{\top}$ of $\theta_i$, and for the new intervention we observe $X$. Noise therefore enters twice, once in the decision and once in the data used to learn it.

In the decision, the team conditions on $X$ rather than on $\tS$. If $\epsilon$ is independent of $\tY$ given $\tS$,
\begin{equation}
\begin{aligned}
&\Prob(\tY\le0\mid X=x)\\
&\quad=\int\Prob(\tY\le0\mid\tS=s)\,dP(s\mid X=x).
\end{aligned}
\label{eq:smooth}
\end{equation}
Conditioning on $X$ averages over the true eval effects that could have produced the observed delta. When the eval is noisy, that range is wide. In the limit $X$ carries no information about the online effect and the sign risk falls back to its class-wide value $\Prob(\tY\le0)$. More historical interventions do not remove this uncertainty. Only a more precise eval does.

The historical corpus has a different noise problem. Noise in the eval deltas flattens the relationship, while noise in the online deltas widens the cloud around it. To learn the relationship between the true effects, we first remove the sampling noise in the corpus. For a new decision, we then put back the uncertainty in the observed eval delta through \eqref{eq:smooth}. The reported measurement variances make the first step possible. Attenuation of the observed correlation \cite{spearman1904} is the second-moment version of the same problem.

Finite-sample noise is centered and shrinks as the task set grows. An LLM judge can also bias the eval effect. If it scores treatment and control outputs differently, the eval effect itself is shifted \cite{panickssery2024,shankar2024}, and more tasks merely make the biased estimate more precise. Section~\ref{sec:judge} returns to this distinction.

\subsection{Observability after deployment}
\label{sec:audit}
Deploying an eval-based decision gate changes which outcomes are observed. Before deployment, the historical corpus contains both an eval effect and an online effect for each intervention. Afterwards, online effects are measured only for the changes sent to an A/B test, so the decisions whose risk the gate is meant to control stop producing outcome labels. In the running example, if large judged wins for the support agent quietly stop converting into resolutions, the data from the remaining A/B tests can look exactly as before. Without audits, drift in that relationship becomes undetectable.

Predeployment validation is therefore not enough. A direct remedy is to send a random fraction of would-be ships, and if needed kills, to audit experiments. Random selection makes those audited outcomes representative of the decisions the gate actually makes and restores labels in the region where the guarantee matters. The appropriate audit rate and test depend on the working model, and Appendix~\ref{app:loop-results} gives one construction.

\section{A Working Model for the Decision}
\label{sec:model}

Section~\ref{sec:setup} defined the decision target without choosing a statistical model. Estimating it requires two further pieces. A measurement model relates the noisy historical deltas to the true intervention effects, and a distributional model turns the recovered relationship into a conditional sign risk. To see what the general claims of Section~\ref{sec:setup} look like in a concrete case, we study the simplest model of each kind, additive measurement error with known variances and bivariate normal effects, stated as three assumptions in addition to Assumption~\ref{a:corpus}. In this model each claim becomes an explicit result that can be computed and checked. 

\begin{assumption}[Moments and measurement error]\label{a:meas}
$P_{\mathcal{C}}$ has mean $\mu=(\mu_S,\mu_Y)^{\top}$, covariance
$\Om=\begin{psmallmatrix}\Oss&\Osy\\\Osy&\Oyy\end{psmallmatrix}\succ 0$, and
finite fourth moments. The estimates satisfy $\hat\theta_i=\theta_i+e_i$. Conditional on $(\theta_i,V_i)$, the
measurement error satisfies $\E[e_i\mid\theta_i,V_i]=0$ and
$\E[e_ie_i^\top\mid\theta_i,V_i]=V_i$, where
$V_i=\mathrm{diag}(\sigma^2_{S,i},\sigma^2_{Y,i})$. The triples
$(\theta_i,V_i,e_i)$ are independent across interventions.
\end{assumption}

\begin{assumption}[Known measurement variances]\label{a:var}
$V_i$ is observed and treated as known for each $i$.
$\bar V_n=n^{-1}\sum_i V_i\xrightarrow{p}\bar V$ finite and
$\sup_i\E\lVert e_i\rVert^4<\infty$.
\end{assumption}

The measurement part of Assumption~\ref{a:meas} is supported by design when the eval and A/B test use disjoint samples, each effect estimator is centered on its true effect, and interventions do not reuse observations. In the typical case, the eval runs on a held-out task set scored offline while the A/B test measures user outcomes on live traffic, so the two estimates share no observations. Disjoint samples make correlated errors less likely but do not rule them out. Appendix~\ref{app:misspec-details} shows that correlated errors can inflate the estimated covariance and false-ship rate. Assumption~\ref{a:var} treats reported measurement variances as known. In practice they are estimated, so this approximation is most credible when each intervention contains enough observations for their estimation error to be negligible. We write $\bar V_S$ and $\bar V_Y$ for the diagonal elements of the limiting average $\bar V$.

Assumptions~\ref{a:corpus}--\ref{a:var} are enough to recover the mean and covariance of the true intervention effects. They are not enough to compute the shipping risks in \eqref{eq:target}. We therefore use normality as a working model for the conditional distribution.

\begin{assumption}[Normal corpus]\label{a:gauss}
$\theta\sim N(\mu,\Om)$ over $\mathcal{C}$ with $\Osy>0$, and
the eval error of the new intervention satisfies $\epsilon\sim N(0,\sigma_S^2)$ independent of $\theta$.
\end{assumption}

If $\Osy<0$, the decision rule reverses, and large positive eval deltas favor kills rather than ships. We state the results for $\Osy>0$. Normal eval error is the usual large-sample approximation for a treatment-effect estimate. Normality of the true effects across interventions is a modeling choice. It implies that the conditional mean of $\tY$ given $\tS$ is linear and that the residual variance is constant across the class. Neither property is argued for on substantive grounds. Both are simplifications inherited from the distributional choice. Appendix~\ref{app:misspec-details} studies departures from this model.

Assumption~\ref{a:corpus} transports the relationship from past interventions to the next one, Assumptions~\ref{a:meas} and~\ref{a:var} let us recover the moments of the true effects from noisy launch logs, and Assumption~\ref{a:gauss} turns those moments into the conditional sign risks the gate needs.

Under Assumption~\ref{a:gauss}, $(X,\tY)$ is bivariate normal, so
\begin{equation}
\begin{aligned}
\tY\mid X=x &\sim N\!\big(m(x),\,v\big),\\
m(x)&=\mu_Y+\tfrac{\Osy}{A}(x-\mu_S),
\end{aligned}
\label{eq:post}
\end{equation}
where $A=\Oss+\sigma_S^2$ and $v=\Oyy-\Osy^2/A$. Equation~\eqref{eq:post} shows exactly what eval noise does. It flattens the slope $\Osy/A$, pulling the predicted online effect back toward the class mean, and it leaves more residual uncertainty about $\tY$. As $\sigma_S^2$ grows, $X$ becomes less informative and the sign risk returns to its class-wide value. Under this model the gate therefore needs only $(\mu,\Om,\sigma_S^2)$.

In this model the strength of the relationship has a scale-free summary,
\begin{equation}
\RC \;=\; \frac{\Osy^2}{\Oss\,\Oyy}\;=\;\Corr(\tS,\tY)^2 .
\label{eq:rc}
\end{equation}
This is the proportion of between-intervention variance in the online effect explained by the best linear predictor based on the eval effect. Under the normal model, the first two moments characterize the full dependence between the two effects. Outside that model, $\RC$ captures only the linear component. Because it discards direction and scale, the shipping rule itself uses $(\mu,\Om,\sigma_S^2)$, not $\RC$ alone.

\paragraph{Running configuration.}
Numerical illustrations use one configuration of the support-agent example throughout. A typical experiment measures the outcome delta with standard error $0.3$pp, and the launch log contains $n=200$ past interventions. The mean eval effect is $\mu_S=0.5$pp with between-intervention variance $\Oss=4$. The mean outcome effect is $\mu_Y=0.05$pp with between-intervention variance $\Oyy=0.09$. We set $\RC=0.6$ and the eval standard error to $\sigma_S=1$pp.

\section{Estimating Surrogacy from Noisy Launch Logs}
\label{sec:estimation}

Under the working model the conditional sign risk depends on the moments $(\mu,\Om)$ of the true intervention effects, but a launch corpus contains only noisy estimates of those effects. In the support-agent example, neither the historical eval improvements nor the historical resolution-rate improvements are observed without noise. Ignoring the noise in either estimate dilutes the observed relationship.

When both $\hat\tau_{S,i}$ and $\hat\tau_{Y,i}$ carry measurement error, the noise inflates the marginal variances. Under Assumption~\ref{a:meas}, the eval and outcome measurement errors are uncorrelated, so measurement error does not alter the covariance in the probability limit, while inflating both marginal variances. The resulting squared correlation is attenuated toward zero.

\begin{proposition}[Attenuation of the naive estimator]
\label{prop:naive}
Under Assumptions~\ref{a:corpus} to \ref{a:var}, the squared sample correlation of the observed deltas converges in
probability to
\begin{equation}
\kappa_S\,\kappa_Y\,\RC,\quad
\kappa_S=\frac{\Oss}{\Oss+\bar V_{S}},\quad
\kappa_Y=\frac{\Oyy}{\Oyy+\bar V_{Y}},
\label{eq:atten}
\end{equation}
the product of the two reliability ratios.
\end{proposition}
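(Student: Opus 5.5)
The plan is to show that the three sample second moments of the observed deltas converge in probability to their population counterparts, and then apply the continuous mapping theorem to the squared correlation. Write $\hat\theta_i=\theta_i+e_i$. The sample covariance matrix of the $\hat\theta_i$ is
\[
\hat\Sigma_n=n^{-1}\sum_i(\hat\theta_i-\bar{\hat\theta})(\hat\theta_i-\bar{\hat\theta})^\top .
\]
This decomposes into four pieces: the sample covariance of the $\theta_i$, two cross terms $n^{-1}\sum_i(\theta_i-\bar\theta)(e_i-\bar e)^\top$ (and its transpose), and $n^{-1}\sum_i(e_i-\bar e)(e_i-\bar e)^\top$. The target limit is
\[
\hat\Sigma_n\xrightarrow{p}\Om+\bar V .
\]
Since $\bar V$ is diagonal, this leaves the off-diagonal entry $\Osy$ unchanged and inflates the diagonal entries to $\Oss+\bar V_S$ and $\Oyy+\bar V_Y$.

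The four pieces are handled as follows.
\begin{itemize}
\item \emph{Sample covariance of the $\theta_i$.} By Assumption~\ref{a:corpus} the $\theta_i$ are i.i.d.\ with finite second moments, so the weak law of large numbers gives convergence to $\Om$.
\item \emph{Cross terms.} Conditional on $(\theta_i,V_i)$, the error $e_i$ has mean zero, and the triples are independent across $i$. Hence $\theta_ie_i^\top$ has mean zero. By Cauchy--Schwarz, finite fourth moments of $\theta$ and $\sup_i\E\lVert e_i\rVert^4<\infty$ bound its second moment uniformly. Chebyshev's inequality then gives $n^{-1}\sum_i\theta_ie_i^\top\xrightarrow{p}0$. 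Similarly $\bar e\xrightarrow{p}0$ and $\bar\theta\xrightarrow{p}\mu$, so the centering corrections vanish.
\item \emph{Error term.} Write
\[
n^{-1}\sum_i e_ie_i^\top=\bar V_n+n^{-1}\sum_i\bigl(e_ie_i^\top-V_i\bigr).
\]
The summands in the second sum are independent, have mean zero (by the conditional second-moment condition), and have variance bounded by $\sup_i\E\lVert e_i\rVert^4$. A Chebyshev / weak-LLN argument for independent, non-identically distributed summands with bounded variance therefore sends this average to $0$, and Assumption~\ref{a:var} gives $\bar V_n\xrightarrow{p}\bar V$.
\end{itemize}

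The squared sample correlation is $g(\hat\Sigma_n)=\hat\Sigma_{12}^2/(\hat\Sigma_{11}\hat\Sigma_{22})$. The map $g$ is continuous at $\Om+\bar V$ because the diagonal there is strictly positive, since $\Om\succ0$. The continuous mapping theorem then yields the limit
\[
\frac{\Osy^2}{(\Oss+\bar V_S)(\Oyy+\bar V_Y)}.
\]
Factoring gives
\[
\frac{\Oss}{\Oss+\bar V_S}\cdot\frac{\Oyy}{\Oyy+\bar V_Y}\cdot\frac{\Osy^2}{\Oss\Oyy}=\kappa_S\kappa_Y\RC ,
\]
which is \eqref{eq:atten}.

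The main obstacle is the error term, because the $e_i$ are independent but not identically distributed: the $V_i$ vary across interventions. The i.i.d.\ law of large numbers therefore does not apply directly. The fix is the variance bound from the uniform fourth-moment condition, which makes the Chebyshev argument above work without identical distribution. The zero off-diagonal of $V_i$ is essential here: it is what guarantees the error contributes nothing to the covariance limit.
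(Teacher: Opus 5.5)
Your proof is correct and follows the same route as the paper's proof. Both show that the sample second moments converge to $\Osy$, $\Oss+\bar V_S$ and $\Oyy+\bar V_Y$, with the off-diagonal untouched because $V_i$ is diagonal, and then apply the continuous mapping theorem; the paper compresses the law-of-large-numbers step into one line, and your decomposition with Chebyshev bounds for the independent but non-identically distributed cross and error terms is simply the explicit version of it.
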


\noindent\emph{Proof.} See Appendix~\ref{app:naive-proof}.

In the running configuration, $\kappa_S=4/(4+1)=0.8$ and $\kappa_Y=0.09/(0.09+0.3^2)=0.5$, so the naive estimator converges to $0.8\times0.5\times0.6=0.24$. A small observed offline--online correlation can therefore reflect noisy treatment-effect estimates rather than a weak relationship between the true effects.

The natural correction subtracts the reported within-intervention noise from the sample covariance, as in random-effects estimation \cite{dersimonian1986,danielshughes1997}, recovering the between-intervention covariance up to a positive-semidefinite projection needed in finite samples. We use it rather than maximum likelihood because it isolates the role of measurement error transparently (Appendix~\ref{app:gate-extensions}).

\begin{theorem}[Moment-corrected estimation]
\label{thm:mom}
Let $W_n$ be the sample covariance with divisor $n-1$ of
$\{\hat\theta_i\}_{i=1}^n$. Define the unprojected correction
$\tilde\Om_n=W_n-\bar V_n$ and the estimator
$\hat\Om=\Pi_{\succeq0}(\tilde\Om_n)$, where $\Pi_{\succeq0}$ projects onto
positive semidefinite matrices \cite{higham1988}. Under
Assumptions~\ref{a:corpus} to \ref{a:var}, the following claims hold.
\begin{enumerate}
\item $\E[\tilde\Om_n]=\Om$.
\item $\hat\Om\xrightarrow{p}\Om$.
\item Let
$\psi_i=\mathrm{vech}\!\left((\hat\theta_i-\mu)(\hat\theta_i-\mu)^\top-\Om-V_i\right)$. If some $\delta>0$ satisfies
$\sup_i\E\lVert \psi_i\rVert^{2+\delta}<\infty$ and
$n^{-1}\sum_i\Var(\psi_i)\to\Sigma$, then, for $\Osy\ne0$,
\[
\sqrt{n}(\hat R^2_{\mathcal{C}}-\RC)
\xrightarrow{d}N(0,g^{\top}\Sigma g),
\]
where $g$ is the delta-method gradient given below.
\end{enumerate}
\end{theorem}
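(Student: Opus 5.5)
The plan is to handle the three claims in order, working throughout with the decomposition $\hat\theta_i=\theta_i+e_i$ and the facts that the triples $(\theta_i,V_i,e_i)$ are independent across $i$ and that $\theta_i$ are i.i.d.\ (Assumption~\ref{a:corpus}).

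For claim 1, write the divisor-$(n-1)$ sample covariance as
\[
W_n=\tfrac{1}{n-1}\Big(\textstyle\sum_i(\hat\theta_i-\mu)(\hat\theta_i-\mu)^\top-n(\bar{\hat\theta}-\mu)(\bar{\hat\theta}-\mu)^\top\Big).
\]
Conditioning on $(\theta_i,V_i)$ and using $\E[e_i\mid\theta_i,V_i]=0$ gives $\E[(\hat\theta_i-\mu)(\hat\theta_i-\mu)^\top]=\Om+\E V_i$, since the cross terms vanish. By independence across $i$, $\E[n(\bar{\hat\theta}-\mu)(\bar{\hat\theta}-\mu)^\top]=\Om+n^{-1}\sum_i\E V_i$. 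Subtracting gives $\E W_n=\Om+n^{-1}\sum_i\E V_i$. Because $V_i$ is treated as known (observed), $\E\bar V_n=n^{-1}\sum_i\E V_i$, and unbiasedness of $\tilde\Om_n$ follows.

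For claim 2, first show $\tilde\Om_n\xrightarrow{p}\Om$, then pass to $\hat\Om$.
\begin{enumerate}
\item Decompose $W_n-\bar V_n$ into the sample covariance of the $\theta_i$, which converges to $\Om$ by the i.i.d.\ law of large numbers.
\item Add twice the cross term $n^{-1}\sum_i(\theta_i-\mu)e_i^\top$, which has mean zero.
\item Add $n^{-1}\sum_i(e_ie_i^\top-V_i)$, which also has mean zero.
\item Add the $O_p(n^{-1})$ mean-correction term.
\end{enumerate}
The two mean-zero averages in steps 2 and 3 are sums of independent, non-identically distributed terms. Their summands have uniformly bounded second moments, by $\sup_i\E\lVert e_i\rVert^4<\infty$ together with the finite fourth moments of $\theta$ via Cauchy--Schwarz. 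Chebyshev's inequality then shows they are $o_p(1)$. Finally, $\Pi_{\succeq0}$, the Frobenius-nearest PSD projection, is continuous and in fact $1$-Lipschitz, and it fixes $\Om\succ0$. Hence $\lVert\hat\Om-\Om\rVert\le\lVert\tilde\Om_n-\Om\rVert$, and the continuous mapping theorem gives $\hat\Om\xrightarrow{p}\Om$.

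For claim 3, write $\sqrt n(\mathrm{vech}\,\tilde\Om_n-\mathrm{vech}\,\Om)=n^{-1/2}\sum_i\psi_i+o_p(1)$. The remainder comes from replacing $\mu$ by $\bar{\hat\theta}$, which contributes $\sqrt n\,O_p(n^{-1})$, and from the $n/(n-1)$ factor. The $\psi_i$ are independent with mean zero by claim 1's computation. The Lyapunov condition holds by the assumed $\sup_i\E\lVert\psi_i\rVert^{2+\delta}<\infty$, and the averaged variance converges to $\Sigma$. The Lindeberg--Feller/Lyapunov CLT (via Cramér--Wold) therefore gives $n^{-1/2}\sum_i\psi_i\xrightarrow{d}N(0,\Sigma)$.

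Next, since $\Om\succ0$ and $\tilde\Om_n\xrightarrow{p}\Om$, we have $\Prob(\tilde\Om_n\succeq0)\to1$. On that event $\hat\Om=\tilde\Om_n$, so the projection is asymptotically negligible even at the $\sqrt n$ scale.

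Apply the delta method to $h(\Om)=\Osy^2/(\Oss\Oyy)$ with gradient in $(\Oss,\Osy,\Oyy)$ coordinates
\[
g=\Big(-\tfrac{\Osy^2}{\Oss^2\Oyy},\ \tfrac{2\Osy}{\Oss\Oyy},\ -\tfrac{\Osy^2}{\Oss\Oyy^2}\Big)^{\top}.
\]
This yields the stated limit. The condition $\Osy\ne0$ ensures $g\ne0$, so the limit is nondegenerate.

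The main obstacle I expect is the bookkeeping in claim 3: showing the remainder is $o_p(1)$ under only non-identically-distributed errors, and justifying that the projection does not disturb the $\sqrt n$ limit. I would handle the latter with the event argument above.
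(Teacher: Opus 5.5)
Your proposal is correct and follows the paper's proof step for step. Part (i) uses the same $\E W_n$ computation. Part (ii) uses a law of large numbers together with asymptotic inactivity of the projection. Part (iii) uses the linearization $\sqrt n\,\mathrm{vech}(\tilde\Om_n-\Om)=n^{-1/2}\sum_i\psi_i+o_p(1)$, the Lyapunov CLT, negligibility of $\Pi_{\succeq0}$ on the event $\{\tilde\Om_n\succeq0\}$, and the delta method.

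You are more explicit than the paper about the non-identically-distributed law of large numbers, and you use nonexpansiveness of the Frobenius projection where the paper uses that $\Pi_{\succeq0}$ is the identity near $\Om\succ0$. Your closing remark overreaches slightly, since $g\ne0$ alone does not give $g^\top\Sigma g>0$, but the theorem does not claim nondegeneracy.
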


\noindent\emph{Proof.} See Appendix~\ref{app:mom-proof}.

The subtraction itself is exactly unbiased. Finite-sample complications enter only through the projection, which is asymptotically inactive when $\Om\succ0$, and through the nonlinear map to $\RC$, whose delta-method gradient in the coordinate order $(\Oss,\Osy,\Oyy)$ is
\begin{equation*}
g=\Big(-\tfrac{\Osy^2}{\Oss^2\Oyy},\;
\tfrac{2\Osy}{\Oss\Oyy},\;
-\tfrac{\Osy^2}{\Oss\Oyy^2}\Big)^{\!\top}.
\end{equation*}
The cross-product moment conditions enter through the asymptotic covariance $\Sigma$.

At $\Osy=0$ the first-order delta-method limit vanishes, so we use the nonparametric bootstrap over interventions for inference. Percentile intervals are slightly too narrow at a few hundred interventions (92--94\% coverage at nominal 95\% in Section~\ref{sec:sims}).

If the launch log contains only interventions that passed an earlier screen such as $\hat\tau_{S,i}\ge c_0$, the moment correction is biased. Appendix~\ref{sec:trunc} gives the corresponding truncated-likelihood correction.

\section{LLM-as-a-Judge Evals}
\label{sec:judge}

The working model of Section~\ref{sec:model} treats finite-sample variation in an eval delta as centered measurement error. An LLM-as-a-judge eval introduces a separate concern. The judge can systematically score treatment and control outputs differently even when the underlying latent quality difference is fixed. Sampling more tasks reduces finite-sample variance, but it does not remove systematic judge distortion. This matters because the gate acts on a treatment-effect estimate rather than an absolute eval score. The same analysis applies when an LLM judge scores live traffic during an A/B test rather than a held-out task set, since the distortion depends on judge behavior rather than on whether scoring happens offline or online.

Let $J_a(t)$ denote the judge verdict for the output on task $t$ under arm $a$. For the binary-score model used here, define sensitivity $Se_a=\Prob(J=1\mid Z=1,a)$ and specificity $Sp_a=\Prob(J=0\mid Z=0,a)$. The observed judge-level eval effect is $\tilde\tau_{S,i}=\E[J_1(t)-J_0(t)]$. We use the following assumption only for the judge-error analysis.

\begin{assumption}[Judging]\label{a:judge}
$(Se_a,Sp_a)$ are constant across tasks within an intervention, and $Se_0+Sp_0>1$.
\end{assumption}

Assumption~\ref{a:judge} says that the judge's sensitivity and specificity are constant across tasks within an intervention and that the control-arm judge is better than random. Define $\lambda_0=Se_0+Sp_0-1$, $\eta_e=Se_1-Se_0$, $\eta_p=Sp_1-Sp_0$, and $\bar\eta=\max(|\eta_e|,|\eta_p|)$. The observed per-arm pass rate is $\tilde p_a=Se_a p_a+(1-Sp_a)(1-p_a)$, where $p_a$ is the latent pass rate.

\paragraph{Arm-homogeneous judge error.}
The benign case is a judge that is equally accurate on both arms. If $Se_1=Se_0=Se$ and $Sp_1=Sp_0=Sp$, then
\[
\tilde\tau_S=\lambda\tS,\qquad \lambda=Se+Sp-1.
\]
If $\lambda$ is common to all interventions, every eval treatment effect is rescaled by the same constant. The sign is preserved when $\lambda>0$, and the trial-level surrogacy computed on the observed judge effects satisfies $\tilde R^2_{\mathcal{C}}=\RC$ whenever $\lambda\ne0$ (Appendix~\ref{app:judge-results}). Judge quality that varies across interventions independently of their effects instead attenuates measured surrogacy (Corollary~\ref{cor:hetero}).

\paragraph{Arm-dependent judge error.}
When sensitivity or specificity differs between treatment and control, the judge effect is no longer a common rescaling. Under Assumption~\ref{a:judge},
\[
\tilde\tau_S=\lambda_0\tS+
\underbrace{\eta_e p_1-\eta_p(1-p_1)}_{b}.
\]
The additive term $b$ is systematic distortion rather than centered sampling noise. For $\tS>0$, the observed judge effect reverses sign exactly when $b<-\lambda_0\tS$, and the distortion obeys $|\tilde\tau_S-\lambda_0\tS|\le\bar\eta$. 

The numerical consequence can be large. In the running judge configuration, sensitivity is $0.90$ and specificity is $0.95$, so $\lambda_0=0.85$. With latent treated pass rate $p_1=0.65$, a judge that is three percentage points less sensitive to treated outputs turns a genuine $+2.0$pp latent eval improvement into an observed $-0.25$pp judge effect. More tasks make the biased conclusion more precise rather than less biased. Self-preference is one documented mechanism for arm-dependent judging \cite{panickssery2024}.

The gate therefore needs a stable eval-effect scale. Dividing by $\lambda$ restores the latent-effect scale under homogeneous error \cite{rogangladen1978}. Arm-dependent bias is not corrected by the moment subtraction, but human-labeled samples from both arms can estimate or bound it. Appendix~\ref{app:judge-results} gives the details.

\section{From Surrogacy to a Shipping Rule}
\label{sec:gate}

We can now return to the original shipping problem. The summary $\RC$ measures the linear strength of trial-level surrogacy, but a gate also needs the direction and scale of the eval--outcome relationship, the class means, and the precision of the new eval estimate. Under the working model these quantities turn \eqref{eq:target} into the normal tail probability $\Prob(\tY\le0\mid X=x)=\Phi[-m(x)/\sqrt v]$. Because $m(x)$ increases when $\Osy>0$, the shipping risk falls as the eval delta grows. The boundary at which that risk reaches $\alpha$ has a closed form.

\begin{theorem}[Oracle gate validity]
\label{thm:gate}
Fix $\alpha\in(0,1/2)$ and define the \emph{surrogate threshold effect}
\begin{equation}
s^{*}(\alpha)=\mu_S+\big(z_{1-\alpha}\sqrt{v}-\mu_Y\big)\,\frac{A}{\Osy},
\label{eq:ste}
\end{equation}
with $z_{1-\alpha}=\Phi^{-1}(1-\alpha)$. Suppose Assumption~\ref{a:gauss} holds and $(\mu,\Om,\sigma_S^2)$ are known.
\begin{enumerate}
\item The policy ships when $X\ge s^{*}(\alpha)$. For every point in this region,
\[
\Prob\big(\tY\le0\mid X=x\big)\le\alpha.
\]
The conditional risk equals $\alpha$ at $x=s^{*}(\alpha)$ and is smaller above the boundary. It follows that
\[
\mathrm{FSR}=\Prob\big(\tY\le0\mid \text{ship}\big)\le\alpha.
\]
\item Define the symmetric lower threshold
\[
s_{\mathrm{lo}}(\alpha)=\mu_S+
\big(-z_{1-\alpha}\sqrt{v}-\mu_Y\big)\frac{A}{\Osy}.
\]
The policy kills when $X\le s_{\mathrm{lo}}(\alpha)$ and satisfies
\[
\Prob(\tY>0\mid\text{kill})\le\alpha.
\]
\end{enumerate}
For $\alpha<1/2$, $s_{\mathrm{lo}}(\alpha)<s^{*}(\alpha)$, and the ship, kill, and A/B test regions partition the line.
\end{theorem}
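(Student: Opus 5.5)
The plan is to work entirely from the Gaussian posterior \eqref{eq:post}. Under Assumption~\ref{a:gauss}, $X=\tS+\epsilon$ with $\epsilon$ independent of $\theta$, so $(X,\tY)$ is bivariate normal with $\Var(X)=A=\Oss+\sigma_S^2>0$ and $\Cov(X,\tY)=\Osy$. Hence $\tY\mid X=x\sim N(m(x),v)$ with $v=\Oyy-\Osy^2/A$. First I check that $v>0$. Since $\Om\succ0$ gives $\Osy^2<\Oss\Oyy\le A\,\Oyy$, we get $v>0$, and the conditional law is nondegenerate. Then
\[
r(x)\;=\;\Prob(\tY\le0\mid X=x)\;=\;\Phi\!\big(-m(x)/\sqrt v\big).
\]
Because $\Osy>0$, $m$ is strictly increasing and affine, so $r$ is continuous and strictly decreasing in $x$.

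For part 1, I solve $r(x)=\alpha$. This is equivalent to $m(x)=z_{1-\alpha}\sqrt v$, that is, $\mu_Y+(\Osy/A)(x-\mu_S)=z_{1-\alpha}\sqrt v$, which rearranges to exactly $x=s^*(\alpha)$ in \eqref{eq:ste}. Strict monotonicity then gives $r(x)\le\alpha$ for all $x\ge s^*$, with equality only at the boundary. For the FSR, I condition on the event $\{X\ge s^*\}$, assuming it has positive probability, which holds since $X$ is normal with positive variance. Then
\[
\mathrm{FSR}=\frac{\E\big[r(X)\,\mathbf 1\{X\ge s^*\}\big]}{\Prob(X\ge s^*)}\le\alpha
\]
by the tower property, since the integrand is pointwise at most $\alpha$ on the region.

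Part 2 is symmetric. Here $\Prob(\tY>0\mid X=x)=\Phi(m(x)/\sqrt v)$ is strictly increasing in $x$. It equals $\alpha$ iff $m(x)=-z_{1-\alpha}\sqrt v$, which gives $x=s_{\mathrm{lo}}(\alpha)$. It is at most $\alpha$ for $x\le s_{\mathrm{lo}}$, and the same tower-property averaging over $\{X\le s_{\mathrm{lo}}\}$ bounds the false-kill rate.

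For the ordering, I compute
\[
s^*-s_{\mathrm{lo}}=2z_{1-\alpha}\sqrt v\,A/\Osy.
\]
This is strictly positive because $\alpha<1/2$ implies $z_{1-\alpha}>0$, and $v,A,\Osy>0$. So the ship region $[s^*,\infty)$, the kill region $(-\infty,s_{\mathrm{lo}}]$ and the test region $(s_{\mathrm{lo}},s^*)$ are disjoint and cover $\mathbb R$.

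The only step needing care is the nondegeneracy $v>0$ together with positivity of the conditioning events. Without them the tail probability could jump or the conditional FSR could be undefined. Both follow directly from $\Om\succ0$, $\sigma_S^2\ge0$ and normality of $X$.
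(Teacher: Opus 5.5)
Your proof is correct and follows the paper's argument essentially step for step: the Gaussian conditional law of $\tY$ given $X$, strict monotonicity of $\Phi(-m(x)/\sqrt{v})$ from $\Osy>0$, solving $m(s^{*})=z_{1-\alpha}\sqrt{v}$, the tower property for the aggregate rate, and the mirrored argument for the kill gate. Your explicit checks that $v>0$ (via $\Om\succ0$) and that $s^{*}-s_{\mathrm{lo}}=2z_{1-\alpha}\sqrt{v}\,A/\Osy>0$ fill in details the paper's proof leaves implicit.
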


\noindent\emph{Proof.} See Appendix~\ref{app:gate}.

Theorem~\ref{thm:gate} is an oracle result. It controls the conditional sign risk pointwise when the class parameters are known and the normal model is correct.  The parameter $\alpha$ is not a frequentist significance level at the point null $\tY=0$. It controls the population-averaged sign error under the assumed effect distribution. Following \citet{burzykowski2006}, we use the term surrogate threshold effect for the boundary in \eqref{eq:ste}.

In the running configuration, at $\alpha=0.1$ the oracle gate ships above $s^{*}=2.94$pp, kills below $-3.02$pp, and sends everything between to an A/B test. For the support agent, the shipping threshold is the eval improvement large enough that the probability of lowering true resolution is at most 10 percent. This resolves 14\% of interventions as ships and 6\% as kills without waiting for an online experiment. When the class mean does not already clear the risk target, the threshold rises as the eval--outcome relationship weakens and eventually diverges. Appendix~\ref{app:gate-extensions} gives the exact comparative statics.

In practice the class parameters are estimated rather than known. The plug-in gate substitutes $\hat\mu$ and the corrected $\hat\Om$ from Theorem~\ref{thm:mom}, and sends the intervention to an A/B test if the fitted quantities do not define a positive-slope, positive-variance gate. Its guarantee is weaker than the oracle's. Appendix~\ref{app:gate-extensions} shows that the plug-in gate controls the aggregate false-ship and false-kill rates asymptotically. An asymptotic vanishing-margin construction recovers literal per-decision control 

\paragraph{Two scope conditions.}
The gate is only as valid as the treatment-effect relationship it transports to the next intervention. Two failures are especially important because they can make a seemingly strong gate misleading even when the estimation algebra is correct.

First, the intervention class is part of the estimand. Pooling heterogeneous classes can manufacture trial-level surrogacy. Appendix~\ref{app:pool-results} constructs two classes with $\RC=0$ whose pooled squared correlation is $0.81$, and shows that a pooled gate applied to the weaker class has a realized false-ship rate of $0.106$ at nominal $\alpha=0.05$. Class definitions must therefore be fixed before estimating surrogacy.

Second, deploying the gate makes the eval-to-outcome relationship unobservable in the ship and kill regions (Section~\ref{sec:audit}). Appendix~\ref{app:loop-results} shows that under the working model the class parameters remain identified from the A/B-tested interventions, while the false-ship behavior in the ship region cannot be tested without audits.

\section{Evaluation}
\label{sec:evaluation}

Under the working model of Section~\ref{sec:model}, the theory leaves two practical questions. At corpus sizes a product team might realistically have, does the moment correction recover the underlying relationship? And when the gate uses estimated rather than known parameters, how closely does its realized error track the target rate? We study both in simulation and then run the full pipeline on the Upworthy Research Archive. Stress tests for judge error, pooling, the closed loop, and misspecification are in Appendix~\ref{app:additional}.

\subsection{Simulation study}
\label{sec:sims}
The simulations draw intervention effects from the working model of Section~\ref{sec:model}, so they measure finite-sample behavior. Appendix~\ref{app:misspec-details} studies departures from the model. Unless stated otherwise, effects are in percentage points with
$\mu=(0.5,0.05)$, $\Oss=4$, $\Oyy=0.09$, and measurement variances drawn uniformly within $\pm50$ percent of their mean, with noise-to-signal ratio $\gamma=\bar V_S/\Oss=\bar V_Y/\Oyy$. Appendix~\ref{app:simdetails} gives complete simulation specifications and seeds. The simulation code is available at \url{https://github.com/MSchultzberg/ship-based-on-evals-alone-paper}.

\subsubsection{Recovering surrogacy under measurement noise}
We first ask whether the moment correction removes the attenuation of Proposition~\ref{prop:naive} at a realistic corpus size. Table~\ref{tab:est} reports bias and RMSE for $\hat R^2_{\mathcal{C}}$ at $n=200$ over 1000 replications. The naive squared correlation exhibits the attenuation predicted by \eqref{eq:atten}. The moment-corrected estimator removes most of it, although a noticeable upward finite-sample bias remains in the noisiest cells. That bias comes from the nonlinear ratio and the projection rather than from the covariance correction, which is unbiased by Theorem~\ref{thm:mom}(i), and it shrinks as the corpus grows. At $\RC=0.5$ and $\gamma=1$, nominal 95\% percentile-bootstrap intervals cover 92--94\% of the time at corpus sizes from 100 to 400, with Monte Carlo error about $0.008$.

\begin{table}[t]
\caption{Bias (RMSE) of $\hat R^2_{\mathcal{C}}$ at $n{=}200$, 1000
replications.}
\label{tab:est}
\begin{tabular}{@{}llrr@{}}
\toprule
$\RC$ & $\gamma$ & naive & corrected \\
\midrule
0.2 & 0.5 & $-0.107$ (0.113) & $+0.011$ (0.085) \\
0.2 & 1.0 & $-0.146$ (0.149) & $+0.024$ (0.124) \\
0.2 & 2.0 & $-0.172$ (0.173) & $+0.083$ (0.244) \\
0.5 & 0.5 & $-0.276$ (0.280) & $+0.007$ (0.108) \\
0.5 & 1.0 & $-0.369$ (0.372) & $+0.034$ (0.174) \\
0.5 & 2.0 & $-0.439$ (0.440) & $+0.075$ (0.284) \\
0.8 & 0.5 & $-0.443$ (0.446) & $+0.009$ (0.103) \\
0.8 & 1.0 & $-0.598$ (0.601) & $+0.007$ (0.162) \\
0.8 & 2.0 & $-0.707$ (0.708) & $-0.012$ (0.223) \\
\bottomrule
\end{tabular}
\end{table}

\subsubsection{Gate calibration and finite-sample estimation}
We next ask whether the gate keeps its false-ship and false-kill rates below $\alpha$, first with known parameters and then with parameters estimated from a corpus. For the two-sided oracle gate at $\sigma_S=1$pp, the exact false-ship and false-kill rates lie below their target $\alpha$ in every configuration, and simulations with two million interventions per configuration confirm each value. The threshold falls as surrogacy strengthens, from $6.28$pp at $\RC=0.3$ to $2.35$pp at $\RC=0.9$. At $\RC=0.6$ and $\alpha=0.1$, the threshold is $2.36$, $2.94$, and $5.18$pp when eval noise is $0.5$, $1$, and $2$pp.

The plug-in gate adds parameter-estimation error. Across 1000 corpus draws at $\RC=0.6$ and nominal $\alpha=0.05$, the realized false-ship rate averages $0.033$, $0.030$, and $0.027$ at corpus sizes $n=100$, $200$, and $400$, well below $\alpha$. Individual corpora can still miss. The fraction of corpora for which $\mathrm{FSR}_n>\alpha$ falls from 20\% at $n=100$ to 13\% at $n=200$ and 6\% at $n=400$, and the worst single draw is $0.168$. The costs are asymmetric because underestimating the shipping threshold sends some interventions directly to production, while overestimating it sends them to an A/B test. At corpus sizes a single team is likely to have, parameter estimation is a substantial source of error even though the plug-in gate is valid asymptotically (Appendix~\ref{app:gate-extensions}).

Full oracle operating characteristics, including ship and kill rates, false-ship and false-kill rates, and $\Prob(\mathrm{ship}\mid\tY=0)$, are reported in Appendix~\ref{app:gate-sim}.

\subsection{Upworthy as a Fail-Closed Case}
\label{sec:empirical}
As a public illustration of the intended analysis, we exercise the pipeline on the exploratory split of the Upworthy Research Archive \cite{matias2021}, comprising 4{,}873 of the archive's 32{,}487 experiments. Each test compared two or more headlines for the same article, with randomized traffic allocation and click-through outcomes. The highest-impression arm is the baseline, and we retain one intervention per test, the highest-impression arm among the rest, to satisfy the independence requirement of Assumption~\ref{a:meas}. After removing the 1{,}050 tests created during the documented randomization failure and enforcing an impression floor, the analysis uses 183 independent interventions from 183 tests across 10 headline-feature classes. A class is named by the first text feature the treated headline adds or drops relative to its baseline.

The eval uses an LLM judge (GPT-4o-mini, $K=5$ graded 0--100 ratings per headline) to rate how clearly each headline communicates what the reader will get, without withholding the point to force a click. The eval therefore targets informativeness while the online outcome is click-through engagement. This mismatch makes the application useful for separating measurement precision from trial-level predictive validity. Additional implementation details are reported in Appendix~\ref{app:upworthy-details}.

The estimated reliability ratios are $\hat\kappa_S=0.93$ and $\hat\kappa_Y=0.74$. Their product is $0.68$, so the naive squared correlation is attenuated to about 68 percent of the latent one, and the corrected estimate should be close to the truth if the model holds. Nevertheless, pooled corrected $\hat R^2_{\mathcal{C}}=0.023$. Because Appendix~\ref{app:pool-results} shows that pooling is not a class-specific certificate, we treat this pooled number only as a summary of the public illustration. Per-class samples are small (9--27 interventions), so per-class estimates are exploratory. They range up to $0.15$, and in every class the bootstrap interval for $\Omega_{SY}$ covers zero. For the pooled pipeline illustration, the fitted shipping threshold at $\alpha=0.10$ is $\hat s^{*}=134$pp, beyond the largest possible eval delta of $100$pp, so no intervention ships and every one is routed to an A/B test. Appendix~\ref{app:upworthy-details} reports diagnostics showing that the moment correction is well behaved in this illustration.

This illustrates the intended fail-closed behavior. In this pooled illustration, neither the corrected relationship nor the per-class estimates provide evidence that the eval's treatment effects predict the online treatment effects. A richer rubric, a domain-matched judge, an eval that directly targets engagement, or a more flexible model of the conditional relationship may produce a different result, which would then have to be estimated and validated for that class.

\section{Discussion and Conclusion}
\label{sec:conclusion}

This paper separates two questions that are often conflated. One asks whether an eval is valid. The other asks whether changes in that eval predict changes in the online outcome well enough to support shipping decisions. The second question concerns interventions and has to be answered empirically, from past interventions that are comparable to the next one. Measurement noise can obscure that relationship, judge bias can distort the eval effect itself, and deployment can make the relationship unobservable where it matters most. The Upworthy illustration shows what happens when a precisely measured eval has a weak pooled relationship with the online outcome. The per-class corpora are too small to establish or rule out a class-specific relationship, and the pooled gate ships nothing.

The closed-form guarantee depends on the working model and must be maintained after deployment. Heterogeneous classes, selected launch logs, correlated measurement errors, and latent mixtures each break it in ways analyzed in Appendix~\ref{app:additional}, and the deployed guarantee needs randomized audits to remain testable in the ship region. The linear model also assumes that the exchange rate between eval and outcome effects is constant across a class, which a quality ceiling or a perceptibility threshold would violate. The decision target \eqref{eq:target} and the smoothing in \eqref{eq:smooth} still apply in such cases, but the class moments no longer determine them.

The working model of Section~\ref{sec:model} is the simplest case of the general requirements stated in Section~\ref{sec:setup}. The most important open problem is to develop models with weaker distributional assumptions that can estimate the conditional sign risk, handle measurement-error correction, and detect drift, ideally with asymptotic guarantees that do not depend on bivariate normality. Other open problems include behavior under distribution shift and the choice of audit rate. Another natural extension is the online judge setting, where an LLM scores live sessions during an A/B test. The judge-error analysis of Section~\ref{sec:judge} applies directly, but the shared sessions introduce correlated measurement errors between the judge score and the outcome, combining two threats the present analysis treats separately. An eval can stand in for an online experiment only when its treatment effects have predicted online effects for comparable changes, and when that relationship remains observable after deployment.

\bibliography{refs}

\onecolumn
\appendix
\numberwithin{theorem}{section}
\numberwithin{equation}{section}
\numberwithin{figure}{section}
\numberwithin{table}{section}

\section{Additional Formal Results and Robustness}
\label{app:additional}

This appendix collects the secondary formal results, failure analyses, and stress tests that delimit the scope of the main trial-level-surrogacy argument.

\subsection{Gate properties and plug-in validity}
\label{app:gate-extensions}

This subsection collects further properties of the gate under the working model, including the mean effect of shipped interventions, the comparative statics of the threshold, and the validity of the plug-in rule. Under Assumption~\ref{a:gauss}, the mean realized effect among shipped interventions is
\begin{equation}
\E[\tY\mid\text{ship}]
=\mu_Y+\frac{\Osy}{\sqrt{A}}\cdot
\frac{\phi(\tilde z)}{1-\Phi(\tilde z)},
\label{eq:mills}
\end{equation}
with $\tilde z=(s^{*}-\mu_S)/\sqrt{A}$. In the running configuration, shipped interventions average $+0.38$pp against a class mean of $0.05$pp. This quantity is used as a calibration target in Appendix~\ref{app:loop-results}. The conditional distribution in \eqref{eq:post} is the bivariate-normal analogue of the prediction interval from random-effects meta-analysis \cite{riley2011}, applied to a new intervention rather than a new study.

\begin{proposition}[Monotonicity of the surrogate threshold effect]
\label{prop:ste}
For $\alpha<1/2$ and $\mu_Y<z_{1-\alpha}\sqrt{v}$, $s^{*}(\alpha)$ is
strictly decreasing in $\Osy$ and $s^{*}(\alpha)\to\infty$ as
$\Osy\downarrow0$. Decreasing $\sigma_S^2$ lowers $A$ and $v$, and so lowers $s^{*}(\alpha)$.
\end{proposition}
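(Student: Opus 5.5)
Write $s^{*}(\alpha)-\mu_S = h(\Osy)\,A$ with $h(c)=\big(z_{1-\alpha}\sqrt{v(c)}-\mu_Y\big)/c$ and $v(c)=\Oyy-c^2/A$. We treat $s^{*}$ as a function of $\Osy=c$ with $\Oss,\Oyy,\sigma_S^2,\mu$ held fixed, and argue by signs of derivatives.

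The admissible range is $0<c<\sqrt{\Oss\Oyy}$, which holds because $\Om\succ0$. On this range $c^2<\Oss\Oyy<A\,\Oyy$, so $v(c)>0$ throughout. Here $\alpha<1/2$ gives $z_{1-\alpha}>0$.

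\textbf{Monotonicity in $\Osy$.} The numerator $N(c)=z_{1-\alpha}\sqrt{v(c)}-\mu_Y$ is strictly decreasing in $c$, since $v$ decreases in $c$. The hypothesis $\mu_Y<z_{1-\alpha}\sqrt{v}$ says $N>0$ at the given parameters. The denominator $c$ is positive and increasing. A positive decreasing numerator over a positive increasing denominator is strictly decreasing. One can check this directly with $h'(c)=\big(N'(c)c-N(c)\big)/c^2<0$.

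One subtlety is that $N$ could turn negative at larger $c$. To keep the strict claim local we differentiate at the given point, where $N>0$. On the whole interval below the given $c$, the hypothesis persists because $N$ is decreasing, so $N$ stays positive there. This is exactly the region needed for the limit.

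Multiplying by $A>0$ gives strict decrease of $s^{*}$. As $c\downarrow0$, $N(c)\to z_{1-\alpha}\sqrt{\Oyy}-\mu_Y$. This limit is at least $N$ at the given point, which is positive, so $N$ stays bounded away from zero. Hence $h(c)\to+\infty$ and $s^{*}\to\infty$.

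\textbf{Dependence on $\sigma_S^2$.} Write $A=\Oss+\sigma_S^2$. Decreasing $\sigma_S^2$ lowers $A$ and therefore lowers $v=\Oyy-\Osy^2/A$. Then
\[
s^{*}-\mu_S=\frac{A}{\Osy}\big(z_{1-\alpha}\sqrt{v}-\mu_Y\big).
\]
Both factors decrease as $A$ decreases. The factor $A/\Osy$ is positive and increasing in $A$. The factor $z_{1-\alpha}\sqrt{v}-\mu_Y$ is increasing in $A$ and positive under the hypothesis.

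This positivity must hold along the path. Since $v$ decreases as $A$ decreases, the hypothesis could fail for small $A$. I would therefore state the claim as follows: while the second factor remains positive, the product of two positive increasing functions of $A$ is increasing, so $s^{*}$ falls.

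If that factor crosses zero, $s^{*}$ still falls. In that regime $(A/\Osy)\cdot(\text{negative, decreasing})$ has derivative
\[
\frac{d}{dA}\,\frac{A N}{\Osy}=\frac{N+A N_A}{\Osy}.
\]
Here $A N_A=z_{1-\alpha}\Osy^2/(2A\sqrt v)>0$, so this derivative is not obviously signed. The crossover regime is therefore the main obstacle. I expect to handle it by restricting the claim to the region where the hypothesis $\mu_Y<z_{1-\alpha}\sqrt v$ holds, matching the proposition's premise. When $\mu_Y\le0$ the factor is positive automatically and the result is unconditional.
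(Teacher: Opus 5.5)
Your argument is correct and is essentially the paper's. The paper differentiates $s^{*}$ in $c=\Osy$ to get $-z_{1-\alpha}/\sqrt v-A(z_{1-\alpha}\sqrt v-\mu_Y)/c^{2}<0$, which is your $A\,h'(c)$. It differentiates in $\Oss$, equivalently $A$, to get exactly your $(N+AN_A)/\Osy>0$, and it asserts both only in the regime $\mu_Y<z_{1-\alpha}\sqrt v$, so restricting the $\sigma_S^2$ claim to that regime, as you propose, matches what is proved.

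Your crossover worry also dissolves for finite changes, by comparing endpoints rather than signing the derivative. If $N>0$ at the starting point but $N\le0$ after decreasing $\sigma_S^2$ (or increasing $\Osy$), then $s^{*}\le\mu_S$ at the new point while $s^{*}>\mu_S$ at the start, so $s^{*}$ still falls; this is the justification your sentence ``$s^{*}$ still falls'' was missing.
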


\noindent\emph{Proof.} See Appendix~\ref{app:gate}.

The threshold therefore falls when the eval--outcome relationship strengthens or the eval becomes more precise, and diverges as the relationship vanishes. The stated condition $\mu_Y<z_{1-\alpha}\sqrt v$ is slightly stronger than the practically relevant one, $\mu_Y<z_{1-\alpha}\sqrt{\Oyy}$, under which the average intervention does not already clear the shipping threshold on prior grounds alone. In the running example, halving eval noise moves the threshold from $2.94$ to $2.36$pp, while raising $\RC$ to $0.9$ moves it to $1.85$pp.

\emph{Remark (operating characteristic at the null).} Under the joint normal model, the conditional distribution of $X$ given $\tY=\delta$ is
\begin{equation*}
X\mid\tY=\delta\sim N\!\left(\mu_S+\frac{\Osy}{\Oyy}(\delta-\mu_Y),\;
A-\frac{\Osy^2}{\Oyy}\right),
\end{equation*}
so the gate's ship probability at any fixed $\delta$ is available in closed form. At $\delta=0$,
\begin{equation*}
\Prob(\mathrm{ship}\mid\tY=0)
=1-\Phi\!\left(
\frac{s^{*}-\mu_S+(\Osy/\Oyy)\mu_Y}
{\sqrt{A-\Osy^2/\Oyy}}
\right).
\end{equation*}
In the running example at $\alpha=0.10$, this probability is $0.047$. The gap between this quantity and $\alpha$ reflects the contribution of the assumed effect distribution to the gate's calibration.

Theorem~\ref{thm:gate} gives the oracle rule when the class parameters are known. In practice they are estimated from the historical corpus. The plug-in gate replaces $(\mu,\Om)$ by the sample mean $\hat\mu=n^{-1}\sum_i\hat\theta_i=(\hat\mu_S,\hat\mu_Y)^\top$ and the noise-corrected covariance $\hat\Om$ from Theorem~\ref{thm:mom}. That substitution introduces estimation error, and Proposition~\ref{prop:plugin} describes what is recovered as the corpus grows.

On the event $D_n=\{\hat\Osy>0,\;\hat A>0,\;\hat v>0\}$, with $\hat A=\hat\Oss+\sigma_S^2$ and $\hat v=\hat\Oyy-\hat\Osy^2/\hat A$, define the plug-in thresholds
\begin{equation}
\hat s^{*}(\alpha)=\hat\mu_S+\big(z_{1-\alpha}\sqrt{\hat v}-\hat\mu_Y\big)\,\frac{\hat A}{\hat\Osy},
\label{eq:plugin}
\end{equation}
and $\hat s_{\mathrm{lo}}(\alpha)$ symmetrically. On $D_n^c$, send the intervention to an A/B test.

The false-ship guarantee requires the normal model to be approximately correct because no amount of data compensates for a misspecified conditional distribution. It also requires a corpus large enough that estimation error does not inflate the rate beyond $\alpha$. Appendix~\ref{app:misspec-details} tests the first requirement. Proposition~\ref{prop:plugin} addresses the second.

Define the aggregate false-ship rate $\mathrm{FSR}_n=\Prob(\tY\le0\mid X\ge\hat s^{*}(\alpha),\hat\mu,\hat\Om)$ on $D_n$ and $\mathrm{FSR}_n=0$ on $D_n^c$, and the false-kill rate $\mathrm{FKR}_n$ symmetrically.

\begin{proposition}[Asymptotic validity of the plug-in gate]
\label{prop:plugin}
Suppose Assumptions~\ref{a:corpus}--\ref{a:var} and \ref{a:gauss} hold, with $\Om\succ0$, $\Osy>0$, and fixed $\alpha\in(0,1/2)$, and let the new intervention be an independent draw from $P_{\mathcal{C}}$, independent of the historical corpus. Then the following statements hold.

\noindent (i) $\hat s^{*}(\alpha)\xrightarrow{p}s^{*}(\alpha)$ and $\hat s_{\mathrm{lo}}(\alpha)\xrightarrow{p}s_{\mathrm{lo}}(\alpha)$.

\noindent (ii) $\Prob\{\mathrm{FSR}_n>\alpha\}\to0$ and $\Prob\{\mathrm{FKR}_n>\alpha\}\to0$.

\noindent (iii) $\sup_{x\ge\hat s^{*}(\alpha)}\Prob(\tY\le0\mid X=x)\xrightarrow{p}\alpha$.
\end{proposition}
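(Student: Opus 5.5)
The plan is to reduce all three claims to consistency of the plug-in parameters, combined with continuity of the oracle quantities in Theorem~\ref{thm:gate} as functions of the threshold.

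\textbf{Step 1: consistency and the event $D_n$.} Under Assumptions~\ref{a:corpus}--\ref{a:var}, the weak law of large numbers for independent, non-identically distributed triples gives $\hat\mu\xrightarrow{p}\mu$. This uses $\E[\hat\theta_i]=\mu$, and the bounded second moments follow from the finite fourth moments of $P_{\mathcal{C}}$ and $\sup_i\E\lVert e_i\rVert^4<\infty$. Theorem~\ref{thm:mom}(ii) gives $\hat\Om\xrightarrow{p}\Om$, and $\sigma_S^2$ is known. Because $\Osy>0$, $A>0$ and $v=\Oyy-\Osy^2/A>0$ (the last from $\Om\succ0$ and $\sigma_S^2\ge0$), the event $D_n$ is a neighbourhood condition around the true parameters, so $\Prob(D_n)\to1$. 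On $D_n$, the map $(\mu,\Om)\mapsto s^{*}(\alpha)$ in \eqref{eq:plugin} is continuous, and so is the corresponding map for $s_{\mathrm{lo}}$. The continuous mapping theorem then yields (i).

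\textbf{Step 2: the oracle false-ship rate as a function of the threshold.} Since the new intervention is independent of the corpus, $\mathrm{FSR}_n=F(\hat s^{*})$ on $D_n$, where $F(s)=\Prob(\tY\le0\mid X\ge s)$ is evaluated under the true normal law. Write $r(x)=\Phi[-m(x)/\sqrt v]$. Then
\[
F(s)=\frac{\int_s^\infty r(x)\,dP_X(x)}{\Prob(X\ge s)},
\]
with $X\sim N(\mu_S,A)$. This expression is continuous in $s$, because the numerator and denominator are continuous and the denominator is positive. Because $\Osy>0$, $r$ is strictly decreasing and $r(s^{*})=\alpha$. Hence $r(x)<\alpha$ for all $x>s^{*}$, and the average satisfies $F(s^{*})<\alpha$ strictly. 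This strict margin, which strengthens the non-strict bound in Theorem~\ref{thm:gate}(1), is the key ingredient.

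\textbf{Step 3: part (ii).} Let $\delta=\alpha-F(s^{*})>0$. By continuity, there is $\eta>0$ such that $|s-s^{*}|<\eta$ implies $F(s)<\alpha$. Therefore
\[
\Prob(\mathrm{FSR}_n>\alpha)\le\Prob(D_n^c)+\Prob(|\hat s^{*}-s^{*}|\ge\eta)\to0.
\]
The false-kill rate is handled the same way, using $1-r$, which is increasing, and $s_{\mathrm{lo}}$.

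\textbf{Step 4: part (iii).} Under the true model, $x\mapsto\Prob(\tY\le0\mid X=x)=r(x)$ is decreasing. Hence on $D_n$ the supremum over $x\ge\hat s^{*}$ equals $r(\hat s^{*})$. This converges in probability to $r(s^{*})=\alpha$ by (i) and continuity of $r$.

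\textbf{Main obstacle.} I expect the main care to lie in Step 2. One must keep distinct the roles of the estimated threshold, which is random through the corpus, and the true conditional law under which $\mathrm{FSR}_n$ is computed; the independence of the new draw is what justifies writing $\mathrm{FSR}_n=F(\hat s^{*})$. One must also verify the strict inequality $F(s^{*})<\alpha$, since without it consistency alone would give only $\limsup\Prob(\mathrm{FSR}_n>\alpha)\le1/2$-type statements. Part (iii) then shows that this margin disappears pointwise at the boundary, which is why per-decision control is only asymptotic.
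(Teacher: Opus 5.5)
Your proposal is correct and follows the paper's proof essentially step for step. Consistency of $(\hat\mu,\hat\Om)$ plus continuity of the threshold map gives (i). The strict slack $F(s^{*})<\alpha$, obtained by averaging a strictly decreasing conditional risk over the ship region, together with independence of the new draw, gives (ii). Monotonicity reduces the supremum in (iii) to $r(\hat s^{*})\xrightarrow{p}\alpha$. The only cosmetic difference is in (ii): you conclude via a neighbourhood of $s^{*}$ on which $F<\alpha$, whereas the paper concludes via $\mathrm{FSR}_n\xrightarrow{p}H(s^{*})<\alpha$; these are the same argument.
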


\noindent\emph{Proof.} See Appendix~\ref{app:plugin}.

Proposition~\ref{prop:plugin} says that the plug-in thresholds converge to the oracle thresholds (i), the probability that the aggregate false-ship or false-kill rate exceeds $\alpha$ vanishes (ii), and the worst-case pointwise risk converges to $\alpha$ (iii). Part (iii) does not imply that the plug-in rule satisfies the pointwise level-$\alpha$ inequality with probability tending to one. The oracle rule in Theorem~\ref{thm:gate} achieves pointwise control because the parameters are exact. The plug-in rule achieves only aggregate control, where estimation error is absorbed by the gap between the boundary risk and the average risk over all shipped interventions. This gap exists because conditional risk falls beyond the boundary, so its average over shipped interventions is below $\alpha$. Appendix~\ref{app:plugin} makes this precise. In the simulation, the probability that a single corpus produces $\mathrm{FSR}_n > \alpha$ falls from 20 percent at $n=100$ to 6 percent at $n=400$. A team that needs a per-corpus statement at these sizes can use the vanishing-margin construction below or a smaller nominal $\alpha$. Iteration can erode this slack when teams stop as soon as the eval clears the threshold, concentrating shipped interventions near the boundary.

\begin{corollary}[Vanishing-margin feasible gate with pointwise control]
\label{cor:margin}
Under the conditions of Proposition~\ref{prop:plugin} and the moment conditions of Theorem~\ref{thm:mom}(iii), let $\alpha_n=\alpha-\delta_n$ with $\delta_n>0$, $\delta_n\to0$, and $\sqrt{n}\,\delta_n\to\infty$. Construct the plug-in gate at level $\alpha_n$ but evaluate it against $\alpha$. Then
\[
\Prob\big\{\Prob(\tY\le0\mid X=x)\le\alpha\;\text{for every}\;x\ge\hat s^{*}(\alpha_n)\big\}\to1,
\]
and similarly for the kill gate. Because $\alpha_n\to\alpha$, the thresholds converge to the same oracle limits, with $\hat s^{*}(\alpha_n)\xrightarrow{p}s^{*}(\alpha)$.
\end{corollary}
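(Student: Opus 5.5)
The plan is to reduce pointwise control to a single inequality at the estimated boundary, and then show that inequality holds with probability tending to one because the plug-in boundary risk fluctuates by $O_p(n^{-1/2})$ while the margin $\delta_n$ is of larger order.

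\textbf{Step 1: reduction to the boundary.} Under Assumption~\ref{a:gauss} with $\Osy>0$, the true conditional risk $r(x)=\Prob(\tY\le0\mid X=x)=\Phi[-m(x)/\sqrt v]$ is strictly decreasing in $x$. Hence the event in the corollary is the single inequality $r(\hat s^{*}(\alpha_n))\le\alpha$, or equivalently $\hat s^{*}(\alpha_n)\ge s^{*}(\alpha)$. This uses that $r$ is continuous and strictly monotone with $r(s^{*}(\alpha))=\alpha$ (Theorem~\ref{thm:gate}). On $D_n^c$ nothing ships, so the event holds trivially there.

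\textbf{Step 2: rate for the plug-in map.} Write $\hat s^{*}(\beta)=h(\hat\mu,\hat\Om,\beta)$, where
\[
h(\mu,\Om,\beta)=\mu_S+\bigl(z_{1-\beta}\sqrt{v}-\mu_Y\bigr)A/\Osy .
\]
The map $h$ is continuously differentiable near the true $(\mu,\Om,\alpha)$, since $\Osy>0$, $A>0$ and $v>0$ there.

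The moment conditions of Theorem~\ref{thm:mom}(iii) give $\sqrt n(\tilde\Om_n-\Om)=O_p(1)$. The projection is inactive with probability tending to one when $\Om\succ0$, so $\hat\Om-\Om=O_p(n^{-1/2})$ as well. For the mean, $\hat\mu-\mu=O_p(n^{-1/2})$ by Chebyshev, using finite second moments of $\theta_i$ and bounded $\bar V_n$.

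Therefore $P(D_n)\to1$, and a first-order expansion in the estimated parameters, at fixed $\beta=\alpha_n$, gives
\[
\hat s^{*}(\alpha_n)=s^{*}(\alpha_n)+O_p(n^{-1/2}).
\]
Uniformity of this expansion in $\beta$ near $\alpha$ follows because the derivative of $h$ is continuous in $\beta$.

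\textbf{Step 3: the deterministic margin dominates.} The oracle threshold satisfies
\[
s^{*}(\alpha)-s^{*}(\alpha_n)=-\frac{\sqrt v\,A}{\Osy}\bigl(z_{1-\alpha}-z_{1-\alpha+\delta_n}\bigr).
\]
Since $\beta\mapsto z_{1-\beta}$ is strictly decreasing, $z_{1-\alpha_n}>z_{1-\alpha}$, so $s^{*}(\alpha_n)>s^{*}(\alpha)$. The gap is $c\,\delta_n(1+o(1))$ with $c=\sqrt v A/(\Osy\,\phi(z_{1-\alpha}))>0$. It follows that
\[
\hat s^{*}(\alpha_n)-s^{*}(\alpha)=c\,\delta_n(1+o(1))+O_p(n^{-1/2}).
\]
Dividing by $\delta_n$ and using $\sqrt n\,\delta_n\to\infty$, the stochastic term is $o_p(\delta_n)$. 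The difference is therefore positive with probability tending to one, which by Step 1 proves the display. Because $\delta_n\to0$, the same expansion gives $\hat s^{*}(\alpha_n)\xrightarrow{p}s^{*}(\alpha)$.

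\textbf{Step 4: the kill gate.} The kill gate is symmetric. The risk $\Prob(\tY>0\mid X=x)$ is increasing in $x$, and $s_{\mathrm{lo}}(\alpha_n)<s_{\mathrm{lo}}(\alpha)$ by a gap of order $\delta_n$. The same argument then applies.

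The main obstacle is Step 2: securing a genuine $\sqrt n$ rate for $\hat\Om$, including the projection, rather than mere consistency. Consistency alone, as in Theorem~\ref{thm:mom}(ii) or Proposition~\ref{prop:plugin}(i), is not enough to beat a vanishing margin. I would handle it by invoking the asymptotic normality underlying Theorem~\ref{thm:mom}(iii) at the level of $\mathrm{vech}(\tilde\Om_n)$ before the delta method. I would then show $\Prob(\hat\Om=\tilde\Om_n)\to1$, using that the smallest eigenvalue of $\tilde\Om_n$ converges to $\lambda_{\min}(\Om)>0$.
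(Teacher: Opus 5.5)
Your proposal is correct and takes essentially the same route as the paper's proof. Both reduce pointwise control to $\hat s^{*}(\alpha_n)\ge s^{*}(\alpha)$ via monotonicity of $\Prob(\tY\le0\mid X=x)$, get $\hat s^{*}(\alpha_n)-s^{*}(\alpha_n)=O_p(n^{-1/2})$ from the $\sqrt n$ rates of Theorem~\ref{thm:mom}(iii) and joint $C^1$ smoothness of the threshold map, and show the deterministic gap $\frac{A\sqrt v}{\phi(z_{1-\alpha})\Osy}\delta_n+o(\delta_n)$ dominates; you get that constant by differentiating the normal quantile in the closed form rather than inverting $g$, which is equivalent. One small slip: the left-hand side of your Step~3 display should read $s^{*}(\alpha_n)-s^{*}(\alpha)$, since the right-hand side as written is positive.
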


\noindent\emph{Proof.} See Appendix~\ref{app:plugin}.

The corollary recovers the pointwise guarantee by targeting a slightly stricter level whose margin dominates the $n^{-1/2}$ estimation error. The conservatism vanishes, so the limiting gate is identical. A team that wants a literal per-decision statement uses $\alpha_n$. Aggregate control needs only $\alpha$.

\paragraph{Alternative estimation under the normal model.}
Under Assumption~\ref{a:gauss}, with conditionally normal historical measurement errors and noninformative $V_i$, maximum likelihood \cite{vanhouwelingen2002,riley2007} can jointly estimate $(\mu,\Om)$ with heterogeneous $V_i$, handle the legacy-gate truncation of Appendix~\ref{sec:trunc} without a separate moment correction, and put estimation and decision under the same model. The main analysis uses the moment estimator because it is easier to reason about and isolates the measurement-error correction.

\subsubsection{Gate operating characteristics}
\label{app:gate-sim}
Table~\ref{tab:gate} shows how the oracle gate trades resolved decisions against errors as surrogacy and $\alpha$ vary. The entries are exact, and each is confirmed by two million simulated interventions.

\begin{table*}[t]
\centering
\caption{Two-sided gate at $\sigma_S{=}1$pp with known parameters. Every entry is exact and confirmed by $2{\times}10^6$ simulated interventions per row. FSR is the false-ship rate, FKR the false-kill rate, both
bounded by $\alpha$ by Theorem~\ref{thm:gate}. $\Prob(\mathrm{ship}\mid\tY=0)$ is the closed-form operating characteristic at the null.}
\label{tab:gate}
\begin{tabular}{@{}llrrrrrr@{}}
\toprule
$\RC$ & $\alpha$ & $s^{*}$(pp) & ship & kill & FSR & $\Prob(\mathrm{ship}\mid\tY=0)$ & FKR \\
\midrule
0.3 & 0.05 & 6.28 & 0.005 & 0.001 & 0.036 & 0.001 & 0.038 \\
0.3 & 0.10 & 4.84 & 0.026 & 0.004 & 0.071 & 0.010 & 0.075 \\
0.3 & 0.20 & 3.09 & 0.124 & 0.033 & 0.138 & 0.078 & 0.149 \\
0.6 & 0.05 & 3.79 & 0.071 & 0.025 & 0.025 & 0.014 & 0.027 \\
0.6 & 0.10 & 2.94 & 0.137 & 0.058 & 0.049 & 0.047 & 0.054 \\
0.6 & 0.20 & 1.92 & 0.263 & 0.132 & 0.097 & 0.149 & 0.109 \\
0.9 & 0.05 & 2.35 & 0.203 & 0.111 & 0.015 & 0.033 & 0.017 \\
0.9 & 0.10 & 1.85 & 0.273 & 0.160 & 0.030 & 0.080 & 0.034 \\
0.9 & 0.20 & 1.23 & 0.371 & 0.235 & 0.061 & 0.187 & 0.071 \\
\bottomrule
\end{tabular}
\end{table*}

\FloatBarrier
\subsection{Additional LLM-as-a-Judge Results}
\label{app:judge-results}

Judge error matters because the gate reads eval effects at face value. Arm-homogeneous error rescales eval effects, while arm-dependent error can shift and even reverse them. We use the notation of Section~\ref{sec:judge} throughout.

\subsubsection{Balanced error is benign}
A judge that is equally accurate on both arms changes the scale of eval effects but nothing else.

\begin{proposition}[Balanced judge error]
\label{prop:nondiff}
If $Se_1=Se_0=Se$ and $Sp_1=Sp_0=Sp$, then
$\tilde\tau_S=\lambda\,\tS$ with $\lambda=Se+Sp-1$. Consequently, if
$\lambda$ is common to all interventions, every trial effect is rescaled by
the same constant, the sign of every effect is preserved when $\lambda>0$, and, when
$\lambda\ne0$, the surrogacy computed on observed effects $\tilde R^2_{\mathcal{C}}$ equals the latent $\RC$.
\end{proposition}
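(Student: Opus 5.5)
The plan is a direct computation from the per-arm misclassification identity stated in Section~\ref{sec:judge}, followed by an invariance argument for the squared correlation. First, under balanced error the observed pass rate in each arm is $\tilde p_a = Se\,p_a + (1-Sp)(1-p_a) = (1-Sp) + \lambda p_a$. Here $\lambda=Se+Sp-1$, and we use that $(Se,Sp)$ is constant across tasks within the intervention, as in Assumption~\ref{a:judge}. This lets us average over $t\sim\mathcal{Q}$ and replace $p_a$ by $\E_{\mathcal{Q}}[Z_a(t)]$. Differencing the two arms, the intercept $1-Sp$ cancels, giving $\tilde\tau_S=\tilde p_1-\tilde p_0=\lambda(p_1-p_0)=\lambda\tS$. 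Equivalently, this is the arm-dependent formula with $\eta_e=\eta_p=0$, so $b=0$. If $\lambda$ is the same for every intervention, then $\tilde\tau_{S,i}=\lambda\tau_{S,i}$ for all $i$. Because $\operatorname{sign}(\lambda\tSi)=\operatorname{sign}(\tSi)$ when $\lambda>0$, every sign is preserved.

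For the surrogacy claim, we treat $\tilde\theta_i=(\lambda\tSi,\tYi)^\top$ as the latent effect vector across interventions. Its between-intervention covariance is $\tilde\Om=D\Om D$ with $D=\mathrm{diag}(\lambda,1)$. That is, $\tilde\Om_{SS}=\lambda^2\Oss$, $\tilde\Om_{SY}=\lambda\Osy$, and $\tilde\Om_{YY}=\Oyy$. Substituting into \eqref{eq:rc} gives
\[
\tilde R^2_{\mathcal{C}}=\frac{\lambda^2\Osy^2}{\lambda^2\Oss\Oyy}=\RC,
\]
which is valid whenever $\lambda\neq0$ so that the denominator is nonzero. (We already have $\Oss>0$ from $\Om\succ0$.) The squared correlation is invariant to nonzero rescaling, including negative $\lambda$, which flips signs but not $\RC$.

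The only delicate point is what ``surrogacy computed on observed effects'' refers to. I would state it for the latent judge-level effects $\tilde\tau_{S,i}$, i.e.\ the population quantity. I would then note that if the estimated deltas are also on the judge scale, with measurement variance $\lambda^2\sigma^2_{S,i}$ plus any extra judge-sampling noise that is reported in $V_i$, then Theorem~\ref{thm:mom} applies to the rescaled corpus unchanged. The moment-corrected estimator therefore targets $\tilde\Om$, and hence $\RC$. Checking that the reported variances are on the matching scale is the main thing to verify. The remaining algebra is routine.
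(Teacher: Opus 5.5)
Your proposal is correct and follows the paper's proof: you difference the per-arm identity $\tilde p_a=Se\,p_a+(1-Sp)(1-p_a)$ so the intercept cancels and $\tilde\tau_S=\lambda\tS$, then use that rescaling one coordinate by a common constant leaves the squared correlation unchanged. Your explicit $D\Om D$ computation is slightly more complete than the paper's appeal to a ``common positive scaling'': it also covers negative $\lambda$, which matches the stated condition $\lambda\ne0$.
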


\noindent\emph{Proof.} See Appendix~\ref{app:judge-proof}.

In the running example, $\lambda=0.85$ simply shrinks every eval delta by 15 percent while leaving $\RC$ exactly at $0.6$. Dividing by $\lambda$ restores the latent-effect scale \cite{rogangladen1978}, although it cannot recover statistical power lost to judge noise. What matters is whether that scaling is stable across the corpus. If judge quality changes across interventions independently of their effects, for example because the judge version, prompt, or criteria drift, the measured surrogacy is attenuated and must be re-anchored with human labels.

\begin{corollary}[Judge heterogeneity across interventions]
\label{cor:hetero}
Suppose $\tilde\tau_{S,i}=\lambda_i\tSi$ with $\lambda_i$ independent of
$\theta_i$, $\E[\lambda_i]=\bar\lambda\ne 0$, $\Var(\lambda_i)=
\sigma_\lambda^2$. Then trial-level surrogacy on observed effects is
\begin{equation}
\tilde R^2_{\mathcal{C}}
=\frac{\RC}{\,1+\dfrac{\sigma_\lambda^2}{\bar\lambda^2}\cdot
\dfrac{\E[\tS^2]}{\Oss}\,}\;\le\;\RC .
\label{eq:hetero}
\end{equation}
\end{corollary}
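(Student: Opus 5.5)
We compute the population second moments of the pair $(\tilde\tau_S,\tY)=(\lambda\tS,\tY)$ directly, using only the independence of $\lambda$ (dropping the index $i$) from $\theta=(\tS,\tY)$. Then we substitute into the definition \eqref{eq:rc} of trial-level surrogacy, with $\Om$ replaced by the covariance of the observed pair. No measurement noise enters here, since the corollary is a statement about the population of judge-level effects. Proposition~\ref{prop:nondiff} is the special case $\sigma_\lambda^2=0$, which serves as a check.

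\textbf{Covariance.} Independence gives $\E[\lambda\tS\tY]=\bar\lambda\,\E[\tS\tY]$ and $\E[\lambda\tS]=\bar\lambda\mu_S$. Hence
\[
\Cov(\lambda\tS,\tY)=\bar\lambda\big(\E[\tS\tY]-\mu_S\mu_Y\big)=\bar\lambda\,\Osy .
\]

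\textbf{Variance.} Independence gives $\E[\lambda^2\tS^2]=(\sigma_\lambda^2+\bar\lambda^2)\E[\tS^2]$. Subtracting $\bar\lambda^2\mu_S^2$ yields
\[
\Var(\lambda\tS)=\bar\lambda^2\Oss+\sigma_\lambda^2\,\E[\tS^2].
\]
This is the only step requiring care: the extra term involves the raw second moment $\E[\tS^2]=\Oss+\mu_S^2$, not $\Oss$. A nonzero mean eval effect therefore amplifies the attenuation, and the formula must keep $\E[\tS^2]$ rather than collapse it to $\Oss$.

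\textbf{Assembly.} Since $\Var(\tY)=\Oyy$ is unchanged,
\[
\tilde R^2_{\mathcal{C}}=\frac{\bar\lambda^2\Osy^2}{\big(\bar\lambda^2\Oss+\sigma_\lambda^2\E[\tS^2]\big)\Oyy}.
\]
Divide numerator and denominator by $\bar\lambda^2\Oss\Oyy$, which is legitimate because $\bar\lambda\ne0$ and $\Om\succ0$. This yields exactly \eqref{eq:hetero}. The inequality $\tilde R^2_{\mathcal{C}}\le\RC$ follows because the added term in the denominator is nonnegative, with equality iff $\sigma_\lambda^2=0$ (given $\E[\tS^2]>0$).

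All moments involved are finite under Assumption~\ref{a:meas}. Finiteness of $\Var(\lambda\tS)$ additionally requires $\E[\lambda^2]<\infty$, which is implicit in $\sigma_\lambda^2$ being defined.
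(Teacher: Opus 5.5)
Your proof is correct and matches the paper's own argument: by independence, $\Cov(\lambda\tS,\tY)=\bar\lambda\,\Osy$ and $\Var(\lambda\tS)=\bar\lambda^2\Oss+\sigma_\lambda^2\E[\tS^2]$, and substituting these into \eqref{eq:rc} gives the result. One small caveat on your extra remark: $\tilde R^2_{\mathcal{C}}=\RC$ holds if and only if $\sigma_\lambda^2=0$ only when $\Osy\ne0$, because when $\RC=0$ both sides vanish for every $\sigma_\lambda^2$.
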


\noindent\emph{Proof.} See Appendix~\ref{app:judge-proof}.

When this variation is unrelated to the effects it attenuates measured surrogacy irreversibly, and when it is related it can distort it in either direction.

\subsubsection{Arm-dependent error can reverse sign}

When judge accuracy differs between treatment arms, the situation is qualitatively worse. In the running example with a latent treated pass rate $p_1=0.65$, a judge three percentage points less sensitive to outputs in the treated system's style turns a genuine $+2.0$pp improvement into an observed $-0.25$pp. The judge can therefore report a negative eval effect for an intervention that genuinely improves latent quality. The bias can also correlate with intervention type, model swaps inducing it while prompt tweaks do not, which contaminates $\Osy$ itself. Self-preference \cite{panickssery2024} is a documented mechanism for this arm dependence.

Recall that $\bar\eta=\max(|\eta_e|,|\eta_p|)$ is the magnitude of the arm dependence.

\begin{theorem}[Arm-dependent error, sign reversal, partial identification]
\label{thm:diff}
Under Assumption~\ref{a:judge},
\begin{equation}
\tilde\tau_S=\lambda_0\,\tS+\underbrace{\eta_e\,p_1-
\eta_p\,(1-p_1)}_{b}.
\label{eq:diff}
\end{equation}
(i) For $\tS>0$, $\tilde\tau_S<0$ if and only if $b<-\lambda_0\tS$.
(ii) $|\tilde\tau_S-\lambda_0\tS|\le\bar\eta$.
(iii) The identified set for $\tS$ given $(\lambda_0,\bar\eta)$ is
\begin{equation}
\tS\in\Big[\frac{\tilde\tau_S-\bar\eta}{\lambda_0},\;
\frac{\tilde\tau_S+\bar\eta}{\lambda_0}\Big]\cap[-1,1],
\label{eq:pid}
\end{equation}
and every point of the set is attainable provided
$\bar\eta\le\min(Se_0,1-Se_0,Sp_0,1-Sp_0)$.
\end{theorem}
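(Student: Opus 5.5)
The plan is to derive the decomposition \eqref{eq:diff} by direct algebra on the per-arm pass rates. Parts (i) and (ii) then follow at once, and part (iii) needs an explicit construction showing the interval is sharp.

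\emph{Step 1: the decomposition.} By Assumption~\ref{a:judge}, $(Se_a,Sp_a)$ do not vary with $t$. Averaging the per-task verdict probability over $\mathcal{Q}$ is therefore linear in $Z_a(t)$, and gives $\tilde p_a=Se_a p_a+(1-Sp_a)(1-p_a)$ with $p_a=\E_{\mathcal{Q}}[Z_a(t)]$ and $\tS=p_1-p_0$. Write $\tilde p_a=\lambda_a p_a+1-Sp_a$ with $\lambda_a=Se_a+Sp_a-1$, and note that $\lambda_1=\lambda_0+\eta_e+\eta_p$. Subtracting the two arms gives
\[
\tilde\tau_S=\lambda_0(p_1-p_0)+(\eta_e+\eta_p)p_1-\eta_p=\lambda_0\tS+\eta_e p_1-\eta_p(1-p_1),
\]
which is \eqref{eq:diff}.

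\emph{Step 2: parts (i) and (ii).} Part (i) is immediate from Step 1: $\tilde\tau_S<0$ if and only if $b<-\lambda_0\tS$. For part (ii), note that $b$ is a convex combination (weights $p_1,1-p_1\in[0,1]$) of $\eta_e$ and $-\eta_p$. Hence $|b|\le p_1|\eta_e|+(1-p_1)|\eta_p|\le\bar\eta$.

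\emph{Step 3: inclusion in part (iii).} Part (ii) gives $\lambda_0\tS\in[\tilde\tau_S-\bar\eta,\tilde\tau_S+\bar\eta]$. Since $\lambda_0>0$ by Assumption~\ref{a:judge}, dividing by $\lambda_0$ yields the interval in \eqref{eq:pid}. Intersecting with $[-1,1]$ is valid because $\tS$ is a difference of two probabilities.

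\emph{Step 4: sharpness of part (iii), the main step.} Fix $\tilde\tau_S$, $Se_0$, $Sp_0$, and hence $\lambda_0$, together with $\bar\eta$. Take any $\tau$ in the set and put $b^\star=\tilde\tau_S-\lambda_0\tau$, so that $|b^\star|\le\bar\eta$. We must exhibit $p_0,p_1\in[0,1]$ and admissible $(Se_1,Sp_1)$ such that $p_1-p_0=\tau$, $\max(|\eta_e|,|\eta_p|)=\bar\eta$, and $b=b^\star$.

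Because $\tau\in[-1,1]$, some $p_1\in[\max(0,\tau),\min(1,1+\tau)]$ exists; fix any such $p_1$ and set $p_0=p_1-\tau$. The remaining difficulty is reaching every $b^\star$ while keeping $\bar\eta$ exactly attained. We split into two families.
\begin{itemize}
\item[(a)] Set $\eta_e=\bar\eta$ and let $\eta_p$ range over $[-\bar\eta,\bar\eta]$. Then $b=\bar\eta p_1-\eta_p(1-p_1)$ sweeps continuously over $[\bar\eta(2p_1-1),\bar\eta]$.
\item[(b)] Set $\eta_p=\bar\eta$ and let $\eta_e$ range over $[-\bar\eta,\bar\eta]$. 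Then $b=\eta_e p_1-\bar\eta(1-p_1)$ sweeps over $[-\bar\eta,\bar\eta(2p_1-1)]$.
\end{itemize}
The union of the two ranges is $[-\bar\eta,\bar\eta]$, which contains $b^\star$.

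Admissibility of the perturbed judge, $Se_1=Se_0+\eta_e\in[0,1]$ and $Sp_1=Sp_0+\eta_p\in[0,1]$, holds for all $|\eta_e|,|\eta_p|\le\bar\eta$. This is exactly what the hypothesis $\bar\eta\le\min(Se_0,1-Se_0,Sp_0,1-Sp_0)$ guarantees.

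If $\bar\eta$ is read only as an upper bound on the arm dependence rather than its exact value, the construction simplifies. Take $\eta_e=-\eta_p=b^\star$; then $b=b^\star$ for every $p_1$.
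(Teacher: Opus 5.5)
Your proof is correct. Steps 1--3 match the paper's argument: the direct expansion of $\tilde p_1-\tilde p_0$, the triangle inequality applied to $b$ as a $p_1$-weighted combination of $\eta_e$ and $-\eta_p$, and division by $\lambda_0>0$ for the inclusion in (iii).

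The real difference is the sharpness construction in (iii). The paper takes $\eta_e=-\eta_p=b_0$, which makes $b=b_0$ for every $p_1$. The judge perturbation is then chosen independently of the latent pass rates, for which the paper uses $p_0=\max(0,-\tau)$ and $p_1=p_0+\tau$; this is the left endpoint of your interval for $p_1$. The cost of that choice is that its arm dependence is only $|b_0|$. It therefore realizes interior points of the set only when $\bar\eta$ is read as an upper bound, which is how the paper uses it in the surrounding text.

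Your construction fixes $p_1$ first and then pins one of the two differences, $\eta_e=\bar\eta$ or $\eta_p=\bar\eta$. This needs a small case split, but it attains every $b^\star\in[-\bar\eta,\bar\eta]$ with $\max(|\eta_e|,|\eta_p|)$ exactly equal to $\bar\eta$. It therefore also covers the literal reading, in which $\bar\eta$ is the actual arm dependence as defined in Section~\ref{sec:judge}. Your closing remark recovers the paper's construction as the simpler special case under the bound reading.
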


\noindent\emph{Proof.} See Appendix~\ref{app:diff-proof}.

Theorem~\ref{thm:diff} characterizes how arm-dependent judge error distorts the eval delta. Part (i) gives the exact condition under which the observed delta reverses sign. Part (ii) bounds the total bias by $\bar\eta$, with a sharper bound when only sensitivity or specificity differs across arms. Part (iii) gives the identified set for the latent eval effect when only a bound on arm dependence is available. It shows how wide the uncertainty gets when per-arm accuracy is bounded rather than estimated.

In the running example, the identified set \eqref{eq:pid} for the $+2.0$pp intervention is $[-3.8,+3.2]$pp. The observed eval delta carries almost no information about the sign of the true effect. With 1000 tasks per arm and latent control pass rate $p_0=0.60$, a sensitivity deficit of $3$pp flips the sign of the observed eval delta 58 percent of the time at $\tS=1.6$pp. More tasks make the biased conclusion more precise, not less biased.

\subsubsection{Diagnosing arm dependence in practice}

Human labels on outputs from both arms identify each arm's sensitivity and specificity, and hence the arm dependence. When per-arm $(Se_a,Sp_a)$ are estimated from human-labeled samples, \eqref{eq:pid} becomes a point correction and the bias $b$ is removed. When only the bound $\bar\eta$ is credible, no point correction is available and the identification interval governs what the eval delta can say.

If a team has human-labeled (input, output) pairs from both arms for an LLM-judged component, scored on the same rubric the judge uses, it can estimate $Se_a$ and $Sp_a$ per arm and compute $\hat\eta_e = \hat{Se}_1 - \hat{Se}_0$ and $\hat\eta_p = \hat{Sp}_1 - \hat{Sp}_0$. Standard confidence intervals for these differences of proportions give an upper bound on $\bar\eta$. Within the single-score model used here, substituting that bound into the identification interval of Theorem~\ref{thm:diff}(iii) shows whether the observed eval delta remains informative about the true effect.

\subsection{Pooling Across Intervention Classes}
\label{app:pool-results}

The intervention class $\mathcal{C}$ is part of the scientific specification
of the estimand, not a variance-reduction choice. All probability statements
in the preceding sections are conditional on a fixed class definition
specified before estimating the parameters used for the gate. Automatic class
discovery from the data is outside scope.

The corpus an organization actually has mixes prompt tweaks, retrieval changes, ranking
adjustments, and model swaps. Pooling across them changes the estimand, and Proposition~\ref{prop:pool} shows why.

\begin{proposition}[Pooling manufactures surrogacy]
\label{prop:pool}
Assume $\beta\ne0$.
Let two classes occur with equal probability, with within-class covariances
$\Om^{(c)}=\mathrm{diag}(\omega_S,\omega_Y)$ (so $\RC=0$ in each class) and
class means $\mu^{(1)}=-(d,\beta d)$, $\mu^{(2)}=(d,\beta d)$. By the law of
total covariance the pooled covariance is
$\Om_{\mathrm{pool}}=\mathrm{diag}(\omega_S,\omega_Y)+
\begin{psmallmatrix}d^2&\beta d^2\\\beta d^2&\beta^2d^2\end{psmallmatrix}$,
hence
\begin{equation}
R^2_{\mathrm{pool}}
=\frac{\beta^2d^4}{(\omega_S+d^2)(\omega_Y+\beta^2d^2)}\;\to\;1
\quad(d\to\infty).
\end{equation}
With $\omega_S=\omega_Y=\beta=1$ and $d=3$, $R^2_{\mathrm{pool}}=0.81$ while every class has $\RC=0$.
\end{proposition}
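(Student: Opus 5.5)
The plan is a direct computation with the law of total covariance, followed by the limit and the numerical check. Let $C\in\{1,2\}$ denote the class label, with $\Prob(C=1)=\Prob(C=2)=1/2$. The pooled distribution of $\theta=(\tS,\tY)^\top$ is the equal-weight mixture of the two class distributions, so
\[
\Om_{\mathrm{pool}}=\E[\Cov(\theta\mid C)]+\Cov(\E[\theta\mid C]).
\]
The first term is $\mathrm{diag}(\omega_S,\omega_Y)$, because both classes share that within-class covariance. For the second term, note that the class means $\pm(d,\beta d)$ average to zero. The between-class covariance is therefore $\tfrac12\sum_c\mu^{(c)}\mu^{(c)\top}=(d,\beta d)^\top(d,\beta d)$. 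This is the rank-one matrix with entries $d^2$, $\beta d^2$ and $\beta^2 d^2$, which gives the displayed $\Om_{\mathrm{pool}}$.

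Next, apply definition \eqref{eq:rc} to $\Om_{\mathrm{pool}}$. The pooled covariance is $\beta d^2$, and the pooled variances are $\omega_S+d^2$ and $\omega_Y+\beta^2d^2$. Substituting these gives
\[
R^2_{\mathrm{pool}}=\frac{\beta^2d^4}{(\omega_S+d^2)(\omega_Y+\beta^2d^2)}.
\]
To get the limit, divide the numerator and the denominator by $\beta^2d^4$. This uses $\beta\ne0$, which is also the only place that hypothesis is needed. The expression becomes $1/[(1+\omega_S/d^2)(1+\omega_Y/(\beta^2d^2))]$, and this tends to $1$ as $d\to\infty$. By contrast, each class has $\Omega_{SY}^{(c)}=0$, so $\RC=0$ within each class.

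Finally, the numerical instance $\omega_S=\omega_Y=\beta=1$ with $d=3$ gives $81/(10\cdot10)=0.81$.

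There is no real obstacle here. The only care needed is to state the law of total covariance for the mixture correctly. In particular, the between-class term must be centered at the pooled mean, and that mean is zero in this construction, which is what makes the rank-one expression exact.
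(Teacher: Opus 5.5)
Your proof is correct and takes the same route as the paper's. Both apply the law of total covariance to get the pooled covariance $\beta d^2$ and the pooled variances $\omega_S+d^2$ and $\omega_Y+\beta^2d^2$, then substitute into \eqref{eq:rc}; your explicit treatment of the zero pooled mean and of the $d\to\infty$ limit just spells out steps the paper leaves as direct substitution.
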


\noindent\emph{Proof.} See Appendix~\ref{app:pool-proof}.

Proposition~\ref{prop:pool} says that pooling heterogeneous intervention classes can produce strong apparent surrogacy even when the eval has no trial-level surrogacy within either class. In the pooling simulation of Appendix~\ref{app:pool-sim}, classes with $\RC$ of $0.72$ and $0.02$ pool to $0.27$, a number that describes neither class.

A pooled corpus can therefore certify an eval that predicts nothing within any
class the team actually decides in, a Simpson-type failure. The converse
misuse also occurs. A pooled slope applied to a class whose own slope differs
miscalibrates the gate, and Appendix~\ref{app:pool-sim} shows a doubling of the
false-ship rate from this mistake. Cross-model benchmark validation is
the extreme case, pooling across systems separated by capability gulfs and
then transporting to marginal product changes. Any team using the framework
must define intervention classes before estimating surrogacy and never pool
across them.

We propose two tests to detect pooling problems before they inflate the false-ship rate. First, homogeneous pools should have the same eval-to-outcome slope across candidate classes. This can be tested by fitting the slope per class and using a standard Wald test. Second, the estimated conditional mean should predict the held-out intervention means. For held-out $i$, regress $\hat\tau_{Y,i}$ on $m(\hat\tau_{S,i})$ with weights $1/(\sigma^2_{Y,i}+v_i)$, where $v_i=\Oyy-\Osy^2/(\Oss+\sigma^2_{S,i})$, and test $(\text{intercept},\text{slope})=(0,1)$ by a Wald test. Rejection means the estimated eval-to-outcome relationship does not hold for new interventions from the same class.

\subsubsection{Simulation of Pooled Intervention Classes}
\label{app:pool-sim}
Proposition~\ref{prop:pool} is a stylized construction, so we also simulate two classes that mimic prompt tweaks and model swaps, with class-level surrogacy $0.723$ and $0.022$. The pooled corpus reports $0.271$. Applying the pooled gate at nominal $\alpha=0.05$ to class-2 interventions ships 9.8\% of them at a realized false-ship rate of $0.106$, more than twice the nominal level. The correctly class-conditioned gate sets $s^{*}=13.5$pp for class 2 and sends every change to an experiment. Figure~\ref{fig:pool} shows the two effect clouds and the misleading pooled slope.

\begin{figure}[t]
\centering
\includegraphics[width=0.72\textwidth]{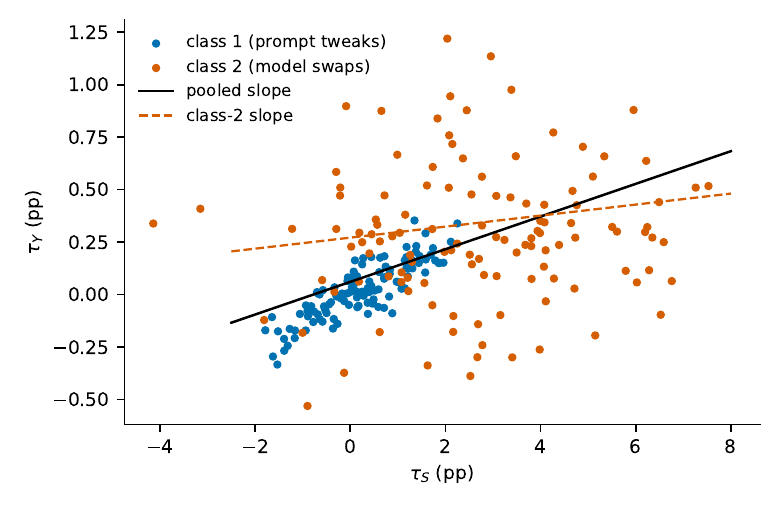}
\caption{Two intervention classes with pooled and class-specific regression slopes. The pooled slope both overstates surrogacy for class 2 and miscalibrates its gate.}
\label{fig:pool}
\end{figure}

\FloatBarrier
\subsection{Closed-loop identification and auditing}
\label{app:loop-results}

Deploying the gate changes the data available over time to evaluate and re-estimate it. New online treatment-effect estimates are observed only for interventions sent to an A/B test. Interventions above the shipping threshold are shipped on eval evidence alone, so no outcome estimate is collected for the decisions whose false-ship risk the gate is intended to control. Without deliberate auditing, the eval-to-outcome relationship can therefore degrade in the ship region without appearing in the data used to monitor the gate. In the running example, experiments run only for observed deltas between $-3.0$ and $+2.9$pp. If large judged wins quietly stop converting into resolutions, the harm falls on interventions shipped above $+2.9$pp, where no experiment runs.

Formally, throughout this section candidates arrive as an i.i.d.\ stream from the class model. Outcome labels are $\hat\tau_Y=\tY+e_Y$, with $e_Y$ mean zero, variance $\sigma_Y^2$, and independent of $(\tS,\tY,\epsilon)$. Labels arrive only on the A/B test region $B=(s_{\mathrm{lo}},s^{*})$.
Let $n_B$ denote the number of labeled interventions observed in $B$.

\begin{proposition}[Identification and blindness under self-selection]
\label{prop:selfsel}
(i) $(\mu,\Om)$ is identified from the marginal distribution of $X$ and the conditional distribution of $\hat\tau_Y\mid X$ on $B$.

\noindent (ii) For any $r\in[0,1]$, there exists a joint distribution $Q$ that agrees with the working model on $B$ but has $\Prob_Q(\tY\le0\mid X\ge s^{*})=r$. No test based on the unaudited data can distinguish $Q$ from the working model.

\noindent (iii) If $e_Y$ is also normal, a labeled point at $X=x$ adds $(x-\bar x)^2/(v+\sigma_Y^2)$ to the Fisher information for the slope $\Osy/A$, up to $O(n_B^{-1})$.
\end{proposition}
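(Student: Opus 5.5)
For part (i), we argue directly from the working model. Under Assumption~\ref{a:gauss} the marginal of $X$ is $N(\mu_S,A)$ and is observed on the whole line, since every candidate's eval delta is recorded whatever its routing. This identifies $\mu_S$ and $A=\Oss+\sigma_S^2$, and hence $\Oss$, because $\sigma_S^2$ is known. On $B$, the labels satisfy $\E[\hat\tau_Y\mid X=x]=m(x)=\mu_Y+(\Osy/A)(x-\mu_S)$ and $\Var(\hat\tau_Y\mid X=x)=v+\sigma_Y^2$, using independence of $e_Y$. Because $B$ is a nondegenerate interval, a linear function known on $B$ determines its intercept and slope. So $\Osy/A$ and $\mu_Y$ are identified, and $\Osy$ follows. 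The conditional variance gives $v$, since $\sigma_Y^2$ is reported, and then $\Oyy=v+\Osy^2/A$. The key point is that truncation acts on $X$, the conditioning variable, so the conditional law on $B$ is the untruncated one.

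For part (ii), we construct $Q$ explicitly. Under $Q$, the law of $X$ is unchanged, and the conditional law of $(\tY,\hat\tau_Y)$ given $X=x$ is taken from the working model for $x\le s^*$. For $x\ge s^*$ we replace it with $\tY\mid X=x\sim N(m_Q(x),v)$. The mean $m_Q(x)$ is chosen so that $\int_{s^*}^\infty\Phi(-m_Q(x)/\sqrt v)\,dP(x)\big/\Prob(X\ge s^*)=r$.

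The main obstacle is the endpoints $r\in\{0,1\}$, which a normal conditional with finite mean cannot reach. There we would use point masses instead: $\tY=1$ or $\tY=-1$ given $X$ in the ship region. More simply, for every $r$ we can take $\tY\mid X\ge s^*$ to be a two-point mixture with weight $r$ on a fixed negative value and weight $1-r$ on a fixed positive value, independent of $x$. This is a valid joint distribution.

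Indistinguishability then follows. The unaudited data consist of $X$ everywhere and $\hat\tau_Y$ only on $B$ (and kills likewise unlabeled). The likelihood of any such sample depends only on the law of $X$ and the conditional law on $B$, and both coincide under $Q$ and the working model. Hence every test has equal power and size under the two.

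For part (iii), we treat the design points as fixed conditional on $X$. The labels on $B$ follow the Gaussian linear regression $\hat\tau_Y=a+\beta(x-\bar x)+u$ with $u\sim N(0,v+\sigma_Y^2)$, where $\beta=\Osy/A$ and the intercept is reparametrized around the sample mean $\bar x$ of the labeled points. The Fisher information matrix for $(a,\beta)$ is then $\sum_j (1,x_j-\bar x)^\top(1,x_j-\bar x)/(v+\sigma_Y^2)$, and centering makes it block diagonal. The slope information is therefore $\sum_j (x_j-\bar x)^2/(v+\sigma_Y^2)$, and a new point at $x$ adds $(x-\bar x)^2/(v+\sigma_Y^2)$. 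The $O(n_B^{-1})$ term accounts for the shift in $\bar x$ caused by adding the point. A point at $x$ moves $\bar x$ by $O(1/n_B)$, which changes the sum by $O(n_B^{-1})$. If instead the information is viewed as joint with $v$ estimated, we also note that in the Gaussian model the variance parameter is orthogonal to $(a,\beta)$, so its presence does not alter the slope information.
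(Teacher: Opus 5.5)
Your proposal is correct and essentially matches the paper's proof. In (i) you use the same moment identification, relying on selection acting only on $X$. In (ii) you land on the same two-point replacement on $\{X\ge s^{*}\}$ with the same pushforward argument. In (iii) you do the same Gaussian-regression information calculation; centering at $\bar x$ is just the Schur-complement computation $S_2-S_1^2/n_B$ that the paper writes out.

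The one step to tighten is the $O(n_B^{-1})$ claim. Shifting $\bar x$ by $(x-\bar x)/(n_B+1)$ changes the old centered sum only at second order, because the deviations sum to zero. The exact increment is $(x-\bar x)^2 n_B/\{(n_B+1)(v+\sigma_Y^2)\}$, as the paper states.
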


\noindent\emph{Proof.} See Appendix~\ref{app:selfsel-proof}.

Proposition~\ref{prop:selfsel} says that all parameters are identified from the data the decision rule generates (i), but the false-ship guarantee in the ship region cannot be tested from the unaudited data (ii). This contrasts with the selective labels problem \cite{lakkaraju2017}, where selection acts on unobservables and identification itself fails. Here selection acts on the observed eval delta, so identification holds but falsifiability does not. Part (iii) says that labeled points far from the A/B test region are the most informative for estimating the slope, which motivates auditing shipped interventions.

A simple way to detect a poorly calibrated gate is to run an A/B test on a random fraction of the interventions the rule would otherwise ship without one. These are the \emph{audit experiments}, and the fraction sent to audit is the \emph{audit rate} $q$.

Since the audit selection is independent of $(\tY, X)$ conditional on the ship region, non-audited ships retain the guarantee of the underlying gate. Theorem~\ref{thm:gate} gives the oracle guarantee, and Proposition~\ref{prop:plugin} gives the plug-in guarantee.

\begin{proposition}[The audited gate]
\label{prop:audit}
Let the audit rate $q$ be constant on the ship region. Then the following statements hold.

\noindent (i) The audited outcome labels are i.i.d.\ with mean $\E[\tY\mid X\ge s^{*}]$.

\noindent (ii) The audited sample identifies the conditional distribution of $\hat\tau_Y$ given $X$ on the ship region.
\end{proposition}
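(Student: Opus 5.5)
The plan is to formalize the audit mechanism as an independent randomization and then apply standard conditioning arguments. Within the closed-loop setup, candidates arrive i.i.d. as $(\tS,\tY,\epsilon,e_Y)$, with $X=\tS+\epsilon$. On the ship region $\{X\ge s^{*}\}$, an audit indicator $D\sim\mathrm{Bernoulli}(q)$ is drawn independently of $(\tS,\tY,\epsilon,e_Y)$ and of all other candidates. The audited labels are $\hat\tau_Y=\tY+e_Y$ for those candidates with $X\ge s^{*}$ and $D=1$. The triples $(X,\hat\tau_Y,D)$ are i.i.d. across candidates, so the audited subsequence consists of i.i.d. draws from the law of $(X,\hat\tau_Y)$ conditional on the event $E=\{X\ge s^{*},D=1\}$. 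To see this, index the audited candidates by their arrival order. The waiting times between audits are geometric and independent of the labels, and the strong Markov / i.i.d. thinning argument gives independence and identical distribution. I will present this as a short lemma on i.i.d. thinning of an i.i.d. sequence by an event of positive probability. The probability of $E$ is $q\,\Prob(X\ge s^{*})>0$, which requires $q>0$; I will state that assumption explicitly.

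For (i), I will compute the conditional law. Because $D$ is independent of everything else and $q$ is constant on the ship region, for any measurable set $C$ we have
\[
\Prob((X,\hat\tau_Y)\in C\mid E)=\frac{q\,\Prob((X,\hat\tau_Y)\in C,\,X\ge s^{*})}{q\,\Prob(X\ge s^{*})}=\Prob((X,\hat\tau_Y)\in C\mid X\ge s^{*}).
\]
The factor $q$ cancels; this cancellation is exactly where constancy of $q$ is used, and with an $x$-dependent rate the audited law would be reweighted. The mean then follows as
\[
\E[\hat\tau_Y\mid X\ge s^{*}]=\E[\tY\mid X\ge s^{*}]+\E[e_Y\mid X\ge s^{*}]=\E[\tY\mid X\ge s^{*}],
\]
because $e_Y$ has mean zero and is independent of $(\tS,\tY,\epsilon)$. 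Finiteness of this mean comes from the finite moments in Assumption~\ref{a:meas}. Under Assumption~\ref{a:gauss}, this quantity is given explicitly by \eqref{eq:mills}.

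For (ii), I will disintegrate the same identity. The audited sample identifies the joint law of $(X,\hat\tau_Y)$ given $X\ge s^{*}$. This joint law determines the regular conditional distribution of $\hat\tau_Y$ given $X=x$ for almost every $x$ in the ship region with respect to the law of $X$. That conditional coincides with the unaudited population conditional, since $\hat\tau_Y\mid X=x$, $X\ge s^{*}$ equals $\hat\tau_Y\mid X=x$. By the previous display, conditioning on $D=1$ does not alter it. Identification then follows from the Glivenko--Cantelli theorem applied to the i.i.d. audited draws.

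The main obstacle is the rigor of the thinning claim in the first step, specifically whether the audited labels remain i.i.d. when the number of audits is random. I will handle this by conditioning on the arrival indices of audits, which are a function only of the $(X,D)$ sequence, and verifying a product form for the joint law of the first $k$ audited observations. The remaining potential subtlety is that "identifies" in (ii) holds only up to null sets of $X$, and I will state it that way.
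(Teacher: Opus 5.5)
Your proposal is correct. For (i) you follow the paper's route: the audit indicator is independent of $(X,\tY)$ on the ship region, so audited ships are i.i.d.\ draws from the shipped distribution, and the label mean is $\E[\tY\mid X\ge s^{*}]$. Your thinning lemma and the explicit cancellation of $q$ supply what the paper asserts in one line. Your exogeneity assumption on $D$ also states precisely an independence the paper takes for granted from $\Prob(\xi=1\mid X\ge s^{*})=q$.

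For (ii) your route is genuinely different. The paper reuses the argument of Proposition~\ref{prop:selfsel}(i) with $B$ replaced by $[s^{*},\infty)$. Under Assumption~\ref{a:gauss}, the conditional mean of $\hat\tau_Y$ given $X$ is linear with constant variance $v+\sigma_Y^2$. Regression on a nondegenerate region then identifies intercept, slope and residual variance, and with them the conditional law inside the working model. It pins down that law completely only once the law of $e_Y$ is fixed, for example as normal. You instead identify the joint law of $(X,\hat\tau_Y)$ on the ship region from the i.i.d.\ audited draws and disintegrate it. This recovers the conditional law of $\hat\tau_Y$ given $X=x$ for almost every $x\ge s^{*}$, with no normality, linearity or homoscedasticity.

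The paper's version connects directly to the parameters used in the standardized-residual test and the leverage calculation of Proposition~\ref{prop:selfsel}(iii). Yours is model-free, which suits the purpose of auditing better. Audits exist to expose departures from the working model in the ship region, such as the distribution $Q$ of Proposition~\ref{prop:selfsel}(ii) or the reward-hacking break, and an identification argument that presupposes the model cannot recover the conditional law in those cases. Your argument also extends immediately to an $x$-dependent rate $q(x)>0$, because reweighting by $q$ changes only the marginal of $X$. Part (i), by contrast, genuinely needs $q$ constant.
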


\noindent\emph{Proof.} See Appendix~\ref{app:audit-proof}.

Proposition~\ref{prop:audit} says that the audited outcomes have a known predicted mean (i), which can be compared against \eqref{eq:mills} to test whether the model holds in the ship region. Under normal measurement errors, the standardized residuals $(\hat\tau_{Y,i}-m(X_i))/\sqrt{v+\sigma_{Y,i}^2}$ are i.i.d.\ standard normal, giving an exact test under known parameters. In the running example, audited ships should average $+0.38$pp. Part (ii) extends the calibration test of Appendix~\ref{app:calib} to the ship region.

\emph{How small can the audit rate be?} Every claim in
Proposition~\ref{prop:audit}(i) holds for any $q>0$, so the rate primarily buys speed. The leverage result in Proposition~\ref{prop:selfsel}(iii) makes
audits efficient, though audited experiments with unbalanced splits need more observations than full experiments to reach the same precision. Standardized audit residuals
are i.i.d.\ standard normal under known parameters, so any anytime-valid
changepoint detector converts them into a revert trigger with a lifetime
false-alarm budget. The simulations use a changepoint mixture of one-sided normal-mixture e-processes, nonnegative supermartingales whose reciprocals give anytime-valid $p$-values, updated after every audited outcome (Appendix~\ref{app:simdetails}). 

That false-alarm bound assumes the parameters are fixed in advance. Our simulation refits them every epoch, so the nominal budget is only a rule of thumb. Estimating the parameters on a separate data set does not by itself solve the problem, because a fixed estimation error can still accumulate in a one-sided detector. A fully valid implementation would need to account for parameter uncertainty, for example through uniformly valid confidence sequences. In the simulation the realized pre-break false-trip rate is $0.025$, inside the $0.05$ budget. At the example's volume, an audit rate of $0.4$ detects the simulated reward-hacking break in 96 percent of runs by the horizon, at a median of 3.5 epochs after it begins (Appendix~\ref{app:loop-sim}). The framework benefits most from a high win rate. More ships mean even a low audit rate produces enough labeled outcomes to detect drift, and the gate saves the most experiments. At a low win rate, few ships occur, a high audit rate is needed to accumulate any audited outcomes at all, and the gate is saving little anyway. In both regimes the audited A/B tests run alongside the ship, so the cost is a small traffic split rather than a delay. The optimal rate remains a value-of-information choice we leave open.

The audit is necessary because two cheaper monitors cannot substitute for it. A sequential monitor on the eval stream catches shifts in the eval-delta marginal, and residuals from the A/B test region catch shifts in the eval-to-outcome relationship at moderate deltas. Neither can detect a slope change confined to the ship region. That is the reward-hacking case, where changes improve the eval score without improving the user outcome. The break simulated in Appendix~\ref{app:loop-sim} is not even confined to the ship region, since it acts on the true eval effect and noisy eval deltas carry some affected interventions into the A/B test region. A sequential monitor on the A/B-test residuals nonetheless detects it in only 1.7 percent of runs by the horizon, and after the break no nontrivial rule that ships on eval evidence alone can meet the promised rate.

\subsubsection{Simulation of Closed-Loop Degradation}
\label{app:loop-sim}
The closed-loop simulation asks what an unnoticed reward-hacking break costs over a finite horizon, with and without audits. The break occurs at epoch seven. From then on the conditional mean of $\tY$ given the true eval effect $\tS$ bends at $\tS=2$pp and falls with slope $-0.08$ beyond that point, while the distribution of eval effects is unchanged. Because the gate acts on the noisy delta $X$, some affected interventions still fall in the A/B test region, so the A/B-test data do change, but only slightly. Without audits the pipeline ships about 19 of 150 candidates per epoch at a realized false-ship rate of $0.31$ against the promised $0.10$, roughly 6 harmful ships per epoch. A sequential monitor on the residuals of the A/B-tested interventions detects the break in only 1.7 percent of runs by the horizon, although its own pre-break false-trip rate is 4.7 percent. With audit rate $0.4$, the same detector on audit residuals detects it in 96 percent of runs, at a median of 3.5 epochs after the break, with a pre-break false-trip rate of 2.5 percent. Before the break, every pipeline realizes a false-ship rate near $0.056$.

Pooling audits into a trailing-window refit instead of testing them reduces harmful ships over the broken epochs from 59 to 31 per run, compared with 23 under the audited e-detector, but its false-ship rate rises to $0.36$. No nontrivial rule that ships on eval evidence alone can do better than $0.21$ after the break, because that is the smallest conditional sign risk at any eval delta. Figure~\ref{fig:loop} summarizes the trajectories.

\begin{figure}[t]
\centering
\includegraphics[width=0.76\textwidth]{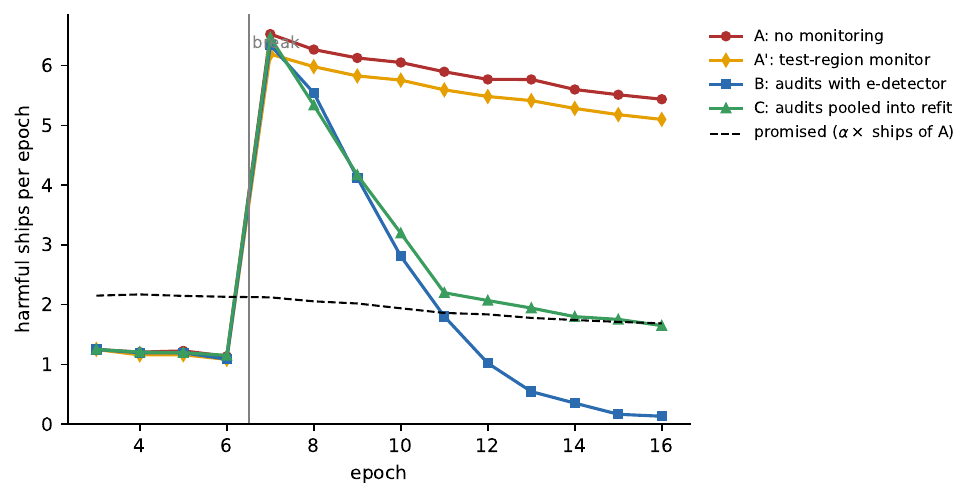}
\caption{Harmful ships per epoch under a reward-hacking break at epoch seven. Without audits the pipeline ships at about three times its promised risk for the rest of the horizon, and a monitor on the A/B-test residuals rarely notices. The audited pipeline's e-detector trips at a median of 3.5 epochs after the break and reverts the class to online testing.}
\label{fig:loop}
\end{figure}

\FloatBarrier
\subsection{Misspecification details}
\label{app:misspec-details}
The simulations in Section~\ref{sec:sims} draw from the model the estimator assumes.
Table~\ref{tab:misspec} perturbs the assumptions one at a time. Each cell defines its own population, so $\RC$ is
measured from that population rather than assumed, and the realized false-ship
rate of each fitted gate is read off that population, with a gate that is not defined counting as a false-ship rate of zero, as in Proposition~\ref{prop:plugin} (Appendix~\ref{app:simdetails}). Assumption~\ref{a:judge} is not perturbed
here, since arm-dependent judging needs latent quality and per-arm human
labels rather than a simulation choice.

Assumption~\ref{a:gauss} splits in two. Its population half, that true effects are normal across the class, is tested via two single-component departures: heavy-tailed $t_5$ effects and a right-skewed eval marginal. Both leave the estimator accurate and the gate conservative. This robustness does not extend to latent mixtures, where the estimator recovers $\RC$ but the gate fails. Departures from normal measurement error have a larger effect, though not enough in this simulation to threaten the guarantee. A $t_5$ shape for the historical measurement errors at the reported variance inflates $\hat R^2_{\mathcal{C}}$ from $0.60$ to $0.63$ while the realized rate stays at $0.060$. Only the new eval error enters Assumption~\ref{a:gauss}, so this row tests the sensitivity of the estimator rather than the gate's distributional assumption. A heavy-tailed error puts most of its draws below its average, so subtracting the average $\bar V_n$ removes too much in the typical sample, and the projection rate rises from $0.25$ to $2.0$ percent. Errors with no fourth moment, such as $t_3$, violate Assumption~\ref{a:var} outright and are outside the theory.

Computing $V_i$ correctly requires accounting for clustered questions and within-question resampling dependence \cite{miller2024}. Understating the measurement variance fails conservatively in this experiment. Reported standard errors that understate the true noise by 20 percent leave both variances inflated, $\RC$ reads $0.34$ against a truth of $0.60$, and the gate answers by raising the threshold and shipping 4 percent of candidates instead of 15. The cost is forgone throughput rather than risk.

Assumption~\ref{a:meas} fails dangerously. The off-diagonal of the sample covariance is unbiased only because the eval and outcome errors are uncorrelated. Disjoint data make this plausible but do not guarantee it, and when it fails the covariance inherits the error correlation. At correlation $0.3$ the estimate reads $0.86$, the threshold falls, and the realized rate reaches $0.096$, still inside the promise but with the margin
gone. At $0.5$ it reads $0.98$ and the realized FSR reaches $0.116$. Any pipeline
that computes the eval and the outcome from overlapping data, for instance by
scoring the same sessions the experiment measures, is in this cell.

A latent mixture is the departure that does hurt. When one intervention in five is drawn from a second component with a weak and offset link, the corpus and the new interventions are still drawn from one common distribution, so Assumption~\ref{a:corpus} holds and only normality fails. The estimator gets the pooled $\RC$ right, $0.341$ against a truth of $0.327$, and the gate still realizes $0.262$. Getting the pooled number right is no help when the conditional distribution varies across the mixture components. The class and held-out calibration tests can expose heterogeneity along observed partitions but do not validate conditional tails, so a latent mixture can evade them. Randomized ship-region audits can reveal realized miscalibration after deployment, but are safeguards rather than ex ante certificates. Assumption~\ref{a:corpus} is not perturbed anywhere in the table.

Two diagnostics flag unstable moment correction. The first is the ratio $\sigma_Y^2/v$, which measures outcome noise relative to residual variance. Values much above one indicate that the subtracted noise dominates the signal. The second is the fraction of bootstrap resamples in which $W_n - \bar V_n$ fails to be positive semidefinite. These diagnostics address noise subtraction, not conditional-model misspecification such as the latent mixture.

\FloatBarrier
\begin{table}[htbp]
\caption{Each row perturbs one assumption. $n{=}200$, $\alpha{=}0.10$, 2000
corpora per row (\texttt{code/simulations.ipynb}, Section 8). Values above $0.10$ violate the nominal FSR bound.}
\label{tab:misspec}
\centering
\begin{tabular}{@{}llrrrr@{}}
\toprule
perturbation & assumption broken & true $\RC$ & mean $\hat R^2_{\mathcal{C}}$ & ship rate & realized FSR \\
\midrule
none & none & 0.600 & 0.614 & 14.7\% & 0.052 \\
heavy-tailed effects ($t_5$) & \ref{a:gauss}, effects & 0.600 & 0.614 & 13.1\% & 0.060 \\
skewed eval effects & \ref{a:gauss}, effects & 0.600 & 0.616 & 13.0\% & 0.042 \\
heavy-tailed errors ($t_5$) & none (shape sensitivity) & 0.601 & 0.630 & 15.6\% & 0.060 \\
s.e.\ understated 20\% & \ref{a:var} & 0.600 & 0.343 & 4.1\% & 0.022 \\
correlated errors, $\rho{=}0.3$ & \ref{a:meas} & 0.600 & 0.864 & 26.0\% & 0.096 \\
correlated errors, $\rho{=}0.5$ & \ref{a:meas} & 0.600 & 0.977 & 31.1\% & 0.116 \\
latent two-component mixture & \ref{a:gauss}, effects & 0.327 & 0.341 & 3.9\% & 0.262 \\
\bottomrule
\end{tabular}
\end{table}

\FloatBarrier

\subsection{Corpora that were already gated}
\label{sec:trunc}
Most launch logs are not a clean sample of what a team tried. If an online
experiment was run only after some informal offline check had passed, the
recorded interventions entered the corpus under a rule such as
$\hat\tau_{S,i}\ge c_0$, and the estimator of Theorem~\ref{thm:mom} is biased.
It is tempting to assume the damage is one-sided, since a gated log holds a
narrower range of eval deltas and narrow ranges usually make a relationship
look weaker. That is not what happens. The estimator subtracts the measurement
noise you report, and on a gated log the remaining spread can be smaller than
the noise being removed, so the subtraction takes away more than the data
contains. The arithmetic then returns something that cannot be a covariance
matrix at all, the correction repairs it, and a repaired matrix looks like
perfect surrogacy. With unit effect variances, $\RC=0.6$, measurement variances $\sigma_S^2=\sigma_Y^2=0.5$, and a legacy gate at the mean, this is exactly what happens, and $\RC$ reads $1$
instead of something below $0.6$. So a gated log can make an eval look stronger
rather than weaker, and the gate built on it can ship too readily.

The repair is a likelihood correction. For this correction, condition on the
observed $V_i$ and assume
$\theta_i\mid V_i\sim N(\mu,\Om)$ and
$e_i\mid(\theta_i,V_i)\sim N(0,V_i)$. The first condition states that $V_i$
is noninformative about the true effects. The density of
$(\hat\tau_{S},\hat\tau_{Y})$ conditional on $V_i$ among included
interventions is then the truncated bivariate normal
\begin{equation*}
\frac{f_{N}\big((\hat\tau_S,\hat\tau_Y);\mu,\Om+V_i\big)}
{\Prob\big(\hat\tau_S\ge c_0;\mu_S,\Oss+\sigma^2_{S,i}\big)}\,
\mathbf{1}\{\hat\tau_S\ge c_0\},
\end{equation*}
and maximizing over $(\mu,\Om)$ with the $V_i$ held at their known values
restores consistency. A legacy gate that let interventions through with some probability is handled the same way, by weighting with the known rule. When the legacy gate selects on a vector of grader scores rather than a single eval delta, the multivariate selection formulas of \citet{muthen1989,muthen2015} apply. This correction may be necessary for many real corpora, and it needs the old rule to be known, or at least recoverable from the log, which is the practical limit on using it.

\subsection{Additional Upworthy implementation details}
\label{app:upworthy-details}
Outcome data are split binomially within each test so that the gate's decisions can be checked against independent outcome estimates. Counting negative held-out estimates among ships measures $\Prob(\hat\tau_Y\le0\mid\text{ship})$ rather than the false-ship rate, so the sign errors of the held-out estimates must be accounted for before the two are compared. Because the fitted gate ships nothing in this illustration, the held-out half goes unused. The positive-semidefinite projection is never needed, neither for the pooled estimate nor in any of 2{,}000 bootstrap resamples, no class estimate lies on the boundary, and outcome noise is $0.37$ times the residual variance, so the moment correction is well behaved. The notebook \texttt{code/upworthy.ipynb} reproduces every estimate in this illustration from the public archive and a cache of the judge's ratings.

\section{Proofs}
\label{app:proofs}

\subsection{Proof of Proposition~\ref{prop:levels}}
\label{app:levels}
\emph{(i)} Within intervention $i$, let $T\in\{0,1\}$ denote the treatment assignment, and at the
unit level $S=\delta_iT+\zeta$, $Y=b_iS+\nu$, with $\zeta,\nu$ independent
mean-zero noises independent of $T$. Across interventions, let $\delta_i$ and
$c_i$ be independent, with $\delta_i$ uniform on $\{1,2\}$ and $c_i$ uniform on
$\{-1,1\}$, and set $b_i=c_i/\delta_i$. Because $Y$ is a function of $(S,\nu)$
with $\nu\perp(T,S)$, $Y\perp T\mid S$, which is Prentice's conditional
independence criterion. In every intervention treatment affects $S$
($\delta_i\ne0$), $S$ is prognostic for $Y$ ($b_i\ne0$), and treatment affects
$Y$ ($\tYi=b_i\delta_i=c_i\ne0$). The trial effects are $\tSi=\delta_i$ and
$\tYi=c_i$, which are independent by construction. Hence
$\Prob(\tY\le0\mid\tS=s)=\Prob(c_i\le0)=1/2$ for every $s$ in the support of $\tS$.
\emph{(ii)} Let $S=\delta_iT+\zeta$ and $Y=\delta_iT+\nu$ with independent
nondegenerate normal noises, so the conditional supports of $S$ under both arms overlap, and $\delta_i$ supported away from zero. Then
$\tSi=\tYi=\delta_i$, so the eval effect determines the online effect. Within every intervention,
$\E[Y\mid T,S]=\delta_iT$ depends on $T$
after conditioning on $S$ (which is a noisy measurement of $\delta_iT$), so
$Y\not\perp T\mid S$ and Prentice's criterion fails. \hfill$\square$

\subsection{Proof of Proposition~\ref{prop:naive}}
\label{app:naive-proof}
Assumption~\ref{a:meas} gives centered measurement errors with zero
cross-coordinate covariance. Assumption~\ref{a:var} gives
$\bar V_n\xrightarrow{p}\bar V$ and the moment bound needed for the law of
large numbers. Therefore the sample moments converge to
\begin{gather*}
\Cov(\hat\tau_S,\hat\tau_Y)\to\Osy,\qquad
\Var(\hat\tau_S)\to\Oss+\bar V_S,\qquad
\Var(\hat\tau_Y)\to\Oyy+\bar V_Y,
\end{gather*}
and the continuous mapping theorem gives \eqref{eq:atten}. \hfill$\square$

\subsection{Proof of Theorem~\ref{thm:mom}}
\label{app:mom-proof}
\emph{Part (i).} Let $X_i=\hat\theta_i$. Assumption~\ref{a:meas} gives
$\E[X_i]=\mu$ and
$\Var(X_i)=\Om+\E[V_i]$. For the divisor-$(n-1)$ sample covariance,
\begin{align*}
\E[W_n]
&=\frac{1}{n-1}\Big(\sum_i\Var(\hat\theta_i)-n\Var(\bar{\hat\theta})\Big)\\
&=\Om+\frac{1}{n}\sum_i\E[V_i].
\end{align*}
Assumption~\ref{a:var} makes every $V_i$ available for the subtraction.
Since $\E[\bar V_n]=n^{-1}\sum_i\E[V_i]$, it follows that
$\E[\tilde\Om_n]=\Om$.

\emph{Part (ii).} The fourth-moment bound in Assumption~\ref{a:var}, together
with the finite fourth moments in Assumption~\ref{a:meas}, gives the law of
large numbers for the sample covariance. The same assumptions give
$\bar V_n-n^{-1}\sum_i\E[V_i]\xrightarrow{p}0$. Therefore
$\tilde\Om_n\xrightarrow{p}\Om$. Since $\Om\succ0$ and eigenvalues are
continuous, the truncation $\Pi_{\succeq0}$ is the identity on a neighborhood
of $\Om$, so $\hat\Om\xrightarrow{p}\Om$ and the projection is inactive with
probability tending to one.

\emph{Part (iii).} With $\psi_i$ defined in the theorem,
\[
\sqrt{n}\,\mathrm{vech}(\tilde\Om_n-\Om)
=n^{-1/2}\sum_i \psi_i+o_p(1).
\]
The remainder comes from centering the sample covariance at its sample mean.
The stated $(2{+}\delta)$-moment bound gives the Lyapunov condition, and
$n^{-1}\sum_i\Var(\psi_i)\to\Sigma$ fixes the limiting covariance. The
Lindeberg--Feller central limit theorem therefore applies. Part (ii) shows
that the projection contributes nothing asymptotically. The
map $\Om\mapsto\Osy^2/(\Oss\Oyy)$ is differentiable at $\Om\succ0$ with the
stated gradient
\[
\frac{\partial}{\partial\Osy}=\frac{2\Osy}{\Oss\Oyy},\qquad
\frac{\partial}{\partial\Oss}=-\frac{\Osy^2}{\Oss^2\Oyy},\qquad
\frac{\partial}{\partial\Oyy}=-\frac{\Osy^2}{\Oss\Oyy^2},
\]
and the delta method concludes. \hfill$\square$

\subsection{Proof of Theorem~\ref{thm:gate},
Proposition~\ref{prop:ste}}
\label{app:gate}
Under Assumption~\ref{a:gauss}, $(X,\tY)$ is jointly normal with
$\Var(X)=A$, $\Cov(X,\tY)=\Osy$, giving \eqref{eq:post}. The conditional mean
$m(x)$ is strictly increasing in $x$ (as $\Osy>0$) with constant variance
$v$, so $x\mapsto\Prob(\tY\le0\mid X=x)=\Phi(-m(x)/\sqrt v)$ is strictly
decreasing. Solving $m(s^{*})=z_{1-\alpha}\sqrt v$ gives \eqref{eq:ste}, at
which the conditional non-positivity probability equals $\alpha$ exactly, and
it is below $\alpha$ for all $x>s^{*}$. The aggregate bound follows from the tower property.
\[
\Prob(\tY\le0\mid X\ge s^{*})
=\E\big[\Prob(\tY\le0\mid X)\mid X\ge s^{*}\big]\le\alpha.
\] The lower gate is
the mirror argument. For \eqref{eq:mills}, the normal identity (Stein's lemma applied to the indicator, or direct integration of the bivariate normal)
\[
\E[\tY\mathbf{1}\{X>s\}]=\mu_Y\Prob(X>s)+
\frac{\Cov(\tY,X)}{\sqrt{\Var X}}\,\phi\!\Big(\frac{s-\mu_S}{\sqrt{\Var X}}\Big)
\]
divided by $\Prob(X>s)=1-\Phi(\tilde z)$ yields \eqref{eq:mills}. For the proposition, write $c=\Osy$, so
$s^{*}-\mu_S=(z\sqrt v-\mu_Y)A/c$ with $v=\Oyy-c^2/A$, and differentiate,
\begin{equation*}
\frac{\partial s^{*}}{\partial c}
=-\frac{z}{\sqrt v}-\frac{A\,(z\sqrt v-\mu_Y)}{c^{2}}.
\end{equation*}
Both terms are negative when $z>0$, that is $\alpha<1/2$, and
$\mu_Y<z\sqrt v$, and as $c\downarrow0$, $v\to\Oyy$ and
$s^{*}\to\infty$. The same computation gives
$\partial s^{*}/\partial\mu_S=1$ and
$\partial s^{*}/\partial\Oss=(z\sqrt v-\mu_Y)/\Osy
+z\,\Osy/(2A\sqrt v)>0$ in the same regime. \hfill$\square$

\subsection{Proof of Proposition~\ref{prop:plugin} and Corollary~\ref{cor:margin}}
\label{app:plugin}

\emph{Part (i).} Under Assumptions~\ref{a:corpus} to \ref{a:var},
$\hat\theta_i=\theta_i+e_i$ with $\E[e_i\mid\theta_i]=0$ and uniformly
bounded fourth moments, so
$\Var(\bar{\hat\theta})=n^{-2}\sum_i\Var(\hat\theta_i)=O(n^{-1})$ and
Chebyshev gives $\hat\mu\xrightarrow{p}\mu$. Theorem~\ref{thm:mom}(ii)
gives $\hat\Om\xrightarrow{p}\Om$. Since $\Om\succ0$ and $\Osy>0$, the
quantities $A=\Oss+\sigma_S^2>0$ and $v=\Oyy-\Osy^2/A>0$ (using
$\Osy^2/A\le\Osy^2/\Oss<\Oyy$ by positive definiteness) are strictly
positive, and $\Osy>0$. The maps $\Om\mapsto\Osy$,
$\Om\mapsto\Oss+\sigma_S^2$, and
$\Om\mapsto\Oyy-\Osy^2/(\Oss+\sigma_S^2)$ are continuous at $\Om$, so
$\Prob(D_n)\to1$ by the continuous mapping theorem. On a neighborhood of the
true $(\mu,\Om)$ all denominators in \eqref{eq:plugin} are bounded away from
zero and the square-root argument is strictly positive, so $s^{*}$ is a
continuous function of $(\mu,\Om)$ at the true parameters. The continuous
mapping theorem applied to $(\hat\mu,\hat\Om)\xrightarrow{p}(\mu,\Om)$
gives $\hat s^{*}(\alpha)\xrightarrow{p}s^{*}(\alpha)$, and symmetrically
for $\hat s_{\mathrm{lo}}$.

\emph{Part (ii).} Define $g(x)=\Prob(\tY\le0\mid X=x)$. Under
Assumption~\ref{a:gauss}, $\tY\mid X=x\sim N(m(x),v)$ with
$m'(x)=\Osy/A>0$, so
\begin{equation}
g(x)=\Phi\big(-m(x)/\sqrt v\big)
\label{eq:gx}
\end{equation}
is continuous and strictly decreasing. By construction,
$m(s^{*}(\alpha))=z_{1-\alpha}\sqrt v$, so
$g(s^{*}(\alpha))=\Phi(-z_{1-\alpha})=\alpha$.

Define the aggregate false-ship function
\begin{align*}
H(s)&=\Prob(\tY\le0\mid X\ge s)=\E\big[g(X)\mid X\ge s\big]\\
&=\frac{\int_s^{\infty}g(x)\,f_X(x)\,dx}{1-F_X(s)},
\end{align*}
where $X\sim N(\mu_S,A)$, so $f_X>0$ everywhere and the denominator is
strictly positive for every finite $s$. Both numerator and denominator are
continuous in $s$, hence $H$ is continuous.

Since $g$ is strictly decreasing and $g(s^{*})=\alpha$, we have
$g(x)<\alpha$ for every $x>s^{*}$. The continuous distribution of $X$
puts zero mass on the boundary, so
\begin{equation}
H(s^{*})=\E[g(X)\mid X\ge s^{*}]<\alpha.
\label{eq:hslack}
\end{equation}

The new intervention is independent of the historical corpus. Conditional
on $(\hat\mu,\hat\Om)$, the fitted threshold is a fixed number and the
new $(X,\tY)$ retains its population distribution. On $D_n$,
$\mathrm{FSR}_n=H(\hat s^{*}(\alpha))$. By part~(i), continuity of $H$,
and the continuous mapping theorem,
$H(\hat s^{*}(\alpha))\xrightarrow{p}H(s^{*}(\alpha))$. Together with
$\Prob(D_n)\to1$,
\[
\mathrm{FSR}_n\xrightarrow{p}H(s^{*}(\alpha))<\alpha.
\]

Let $\Delta=\alpha-H(s^{*}(\alpha))>0$ by \eqref{eq:hslack}. Then
$\Prob(\mathrm{FSR}_n>\alpha)\le\Prob(|\mathrm{FSR}_n-H(s^{*})|>\Delta)
\to0$.

For the kill gate, define $k(x)=\Prob(\tY>0\mid X=x)=\Phi(m(x)/\sqrt v)$,
which is continuous and strictly increasing. By construction
$k(s_{\mathrm{lo}}(\alpha))=\alpha$, and
$L(s)=\E[k(X)\mid X\le s]$ is continuous with
$L(s_{\mathrm{lo}})<\alpha$ by the same boundary-measure argument.
Consistency of $\hat s_{\mathrm{lo}}$ and continuity of $L$ give
$\mathrm{FKR}_n\xrightarrow{p}L(s_{\mathrm{lo}})<\alpha$ and
$\Prob(\mathrm{FKR}_n>\alpha)\to0$.

\emph{Part (iii).} Since $g$ is strictly decreasing,
$\sup_{x\ge\hat s^{*}}g(x)=g(\hat s^{*})$. By part~(i) and continuity of $g$,
$g(\hat s^{*})\xrightarrow{p}g(s^{*})=\alpha$. However, convergence
\emph{to} $\alpha$ does not imply $\Prob\{g(\hat s^{*})\le\alpha\}\to1$.
An estimator can converge to a boundary value while remaining above it with
nonvanishing probability. The aggregate result in part~(ii) works because
$H(s^{*})<\alpha$ strictly, providing a gap that absorbs estimation error.
No such gap exists at the pointwise boundary. \hfill$\square$

\emph{Proof of Corollary~\ref{cor:margin}.} Under the additional moment conditions, Theorem~\ref{thm:mom}(iii) gives
$\sqrt{n}\,\mathrm{vech}(\hat\Om-\Om)=O_p(1)$, and the mean satisfies
$\sqrt{n}\,(\hat\mu-\mu)=O_p(1)$ by the variance calculation in part~(i).
The threshold $s^{*}$ is continuously differentiable in $(\mu,\Om,\alpha)$
at the true parameters (all denominators nonzero, square-root argument
positive), so the mean-value theorem gives
$\hat s^{*}(\alpha_n)-s^{*}(\alpha_n)=O_p(n^{-1/2})$.

Since $g$ is continuously differentiable and strictly decreasing with
$g(s^{*}(\alpha))=\alpha$, and $g'(s^{*}(\alpha))=-\phi(z_{1-\alpha})\,
\Osy/(A\sqrt v)<0$, the mean-value theorem applied to
$g(s^{*}(\alpha_n))=\alpha_n=\alpha-\delta_n$ gives
\begin{align*}
s^{*}(\alpha_n)-s^{*}(\alpha)
&=\frac{\delta_n}{|g'(s^{*}(\alpha))|}+o(\delta_n)\\
&=\frac{A\sqrt v}{\phi(z_{1-\alpha})\,\Osy}\,\delta_n+o(\delta_n).
\end{align*}
The deterministic gap is of order $\delta_n$, and the estimation error is
$O_p(n^{-1/2})$. Since $\sqrt{n}\,\delta_n\to\infty$, we have
$n^{-1/2}=o(\delta_n)$, so
$\hat s^{*}(\alpha_n)-s^{*}(\alpha)=
[s^{*}(\alpha_n)-s^{*}(\alpha)]+O_p(n^{-1/2})$,
which is positive with probability tending to one. On that event, every
$x\ge\hat s^{*}(\alpha_n)$ satisfies $x\ge s^{*}(\alpha)$, and since
$g(x)\le g(s^{*}(\alpha))=\alpha$ for all such $x$, the pointwise
guarantee holds. Because $\alpha_n\to\alpha$, continuity gives
$s^{*}(\alpha_n)\to s^{*}(\alpha)$ and hence
$\hat s^{*}(\alpha_n)\xrightarrow{p}s^{*}(\alpha)$.

The kill gate follows symmetrically. The function $k$ is strictly increasing with
$k(s_{\mathrm{lo}}(\alpha))=\alpha$ and $k'(s_{\mathrm{lo}})>0$, so
targeting $\alpha_n<\alpha$ yields
$s_{\mathrm{lo}}(\alpha_n)<s_{\mathrm{lo}}(\alpha)$ by a margin of order
$\delta_n$, dominating the $O_p(n^{-1/2})$ estimation error, and on the
high-probability event $\hat s_{\mathrm{lo}}(\alpha_n)\le s_{\mathrm{lo}}(\alpha)$, every $x\le\hat s_{\mathrm{lo}}(\alpha_n)$ satisfies
$k(x)\le k(s_{\mathrm{lo}}(\alpha))=\alpha$. \hfill$\square$

\subsection{Proof of Proposition~\ref{prop:nondiff} and
Corollary~\ref{cor:hetero}}
\label{app:judge-proof}
\[
\tilde p_1-\tilde p_0
=Se(p_1-p_0)+(1-Sp)\big((1-p_1)-(1-p_0)\big)=(Se+Sp-1)(p_1-p_0),
\]
giving the first claim, and a common positive scaling of one coordinate leaves both the
correlation and its square unchanged. For the corollary, with
$\tilde\tau_{S,i}=\lambda_i\tSi$ with $\lambda_i\perp\theta_i$,
\begin{align*}
\Cov(\tilde\tau_S,\tY)&=\bar\lambda\,\Osy, \\
\Var(\tilde\tau_S)&=\E[\lambda^2]\E[\tS^2]-\bar\lambda^2(\E[\tS])^2
=\bar\lambda^2\Oss+\sigma_\lambda^2\,\E[\tS^2],
\end{align*}
and substituting into \eqref{eq:rc} gives \eqref{eq:hetero}. \hfill$\square$

\subsection{Proof of Theorem~\ref{thm:diff}}
\label{app:diff-proof}
Expanding $\tilde p_1-\tilde p_0$ with $Se_1=Se_0+\eta_e$,
$Sp_1=Sp_0+\eta_p$,
\begin{align*}
\tilde\tau_S
&=Se_0(p_1-p_0)+\eta_ep_1
+(1-Sp_0)\big((1-p_1)-(1-p_0)\big)-\eta_p(1-p_1)\\
&=\lambda_0\tS+\eta_ep_1-\eta_p(1-p_1).
\end{align*}
(i) is immediate from \eqref{eq:diff}. For (ii), by the triangle inequality
\[
|b|\le|\eta_e|\,p_1+|\eta_p|\,(1-p_1)
\le\bar\eta\,p_1+\bar\eta\,(1-p_1)=\bar\eta.
\]
If
$\eta_p=0$ then $|b|=|\eta_e|p_1\le\bar\eta p_1
\le\bar\eta\max(p_1,1-p_1)$, symmetrically if $\eta_e=0$. For (iii), sharpness follows because for any
$b_0\in[-\bar\eta,\bar\eta]$, take
$\eta_e=b_0$, $\eta_p=-b_0$ (within the bound, and valid
probabilities under the stated condition), for which
$b=b_0p_1+b_0(1-p_1)=b_0$ for every $p_1$, so the bias attains every value in
the interval, and inverting \eqref{eq:diff} for $\tS$ yields exactly
\eqref{eq:pid}. For feasibility of the intersection, for any candidate $\tau\in[-1,1]$ take
$p_0=\max(0,-\tau)$ and $p_1=p_0+\tau$, both in $[0,1]$, and no
$\tau\notin[-1,1]$ is a difference of probabilities. \hfill$\square$

\subsection{Proof of Proposition~\ref{prop:pool}}
\label{app:pool-proof}
For pooling, with class label $c$ uniform on $\{1,2\}$, the law of total
covariance gives
$\Cov(\tS,\tY)=\E[\Cov(\tS,\tY\mid c)]+\Cov(\E[\tS\mid c],\E[\tY\mid c])
=0+\beta d^2$ and analogously
$\Var(\tS)=\omega_S+d^2$, $\Var(\tY)=\omega_Y+\beta^2d^2$, from which the
displayed formula and limit follow, and the numeric instance is direct
substitution. \hfill$\square$

\subsection{Proof of Proposition~\ref{prop:selfsel}}
\label{app:selfsel-proof}
(i) $\mu_S=\E[X]$ and $A=\Var(X)$ involve only the unlabeled stream.
Since $e_Y$ is independent of $(\tS,\tY,\epsilon)$ and mean zero,
$\E[\hat\tau_Y\mid X]=\E[\tY\mid X]=\mu_Y+(\Osy/A)(X-\mu_S)$ and
$\Var(\hat\tau_Y\mid X)=v+\sigma_Y^2$ under Assumption~\ref{a:gauss}.
Both conditional moments are unchanged by conditioning on the
$X$-measurable event $\{X\in B\}$, and a nondegenerate interval $B$
identifies slope, intercept, and residual variance, hence
$\Osy=\text{slope}\cdot A$, $\mu_Y$, $v$, $\Oyy=v+\Osy^2/A$, and
$\Oss=A-\sigma_S^2$. Consistency follows from the law of large numbers and
continuity of the maps. (ii) Construct $Q$ by keeping the distribution of $X$ and the
conditional distribution of $\tY$ given $X=x$ for $x<s^{*}$, and replacing it for
$x\ge s^{*}$ by a two-point distribution placing mass $r$ on $-1$ and $1-r$ on $+1$. The observed data, $\{X_i\}$ together with
$\{\hat\tau_{Y,i}:X_i\in B\}$, is a measurable function of quantities
whose joint distribution is unchanged, so its pushforward is identical under the two
distributions and every test's rejection probability coincides. (iii) With unknown intercept the per-observation information matrix is
\[
\frac{1}{s^2}\begin{pmatrix}1 & x-\mu_S\\ x-\mu_S & (x-\mu_S)^2\end{pmatrix},
\qquad s^2=v+\sigma_Y^2,
\]
and the efficient information for the slope from a design $\{x_i\}$ is $s^{-2}(S_2-S_1^2/n_B)$ with $S_k=\sum_i(x_i-\mu_S)^k$. Adding a point at $x$ raises it by $(x-\bar x)^2/\{s^2(1+n_B^{-1})\}$. Estimating the slope from one region alone would instead give $\Var(X\mid X\in R)/s^2$, which is small in the tail because tail points cluster, but that is not the right comparison here, because the audits are added to a design that the A/B test region has already pinned down.
\hfill$\square$

\subsection{Proof of Proposition~\ref{prop:audit}}
\label{app:audit-proof}
Let $\xi\in\{0,1\}$ indicate selection for an audit experiment, with
$\Prob(\xi=1\mid X\ge s^{*})=q$. The guarantee for non-audited ships follows from independence. On $\{X\ge s^{*}\}$ with constant $q$, $\xi$ is independent of
$(\tY,X)$, so conditioning on $\xi=0$ changes nothing and
Theorem~\ref{thm:gate} applies. For general $q(\cdot)$,
$\Prob(\tY\le0\mid X\ge s^{*},\xi=0)$ is a $(1-q(X))$-weighted average
over $\{X\ge s^{*}\}$ of $g(x)=\Prob(\tY\le0\mid X=x)\le\alpha$,
hence at most $\alpha$.
(i) With constant $q$ the indicator is independent of $(X,\tY)$ on $\{X\ge s^{*}\}$, so audited ships are i.i.d.\ draws from the shipped distribution and $\E[\hat\tau_Y]=\E[\tY\mid X\ge s^{*}]$, which is \eqref{eq:mills}. Asymptotic normality follows from the central limit theorem and consistency from the law of large numbers. Truncating a bivariate normal on $X$ leaves $\hat\tau_Y$ skew-normal, so the mean is not exactly normal in finite samples. The residual $\tY-m(X)$ is normal under Assumption~\ref{a:gauss} and independent of the $X$-measurable selection, and adding $e_Y\sim N(0,\sigma_Y^2)$ keeps it normal, which is where that condition is used. For $X$-dependent $q$ the audited density carries the factor $q(x)/\E[q(X)\mid X\ge s^{*}]$. (ii) The argument of
Proposition~\ref{prop:selfsel}(i) applies with $B$ replaced by
$[s^{*},\infty)$. \hfill$\square$

\section{Linear-Projection Calibration Test}
\label{app:calib}
This test checks whether the linear projection of the outcome on the eval transports to held-out interventions. Its null hypothesis holds for any joint distribution with the assumed first two moments, so passing it is necessary evidence for transport of the fitted regression relationship but does not validate the conditional mean function, normal conditional tails, or the gate guarantee under misspecification.

For held-out interventions $i=1,\dots,m$, condition on the observed $V_i$ and,
in addition to Assumption~\ref{a:meas}, assume
$\theta_i\mid V_i\sim N(\mu,\Om)$ and
$e_i\mid(\theta_i,V_i)\sim N(0,V_i)$. Without these conditional normal models,
the distribution below is an approximation rather than an identity. Define
$A_i=\Oss+\sigma^2_{S,i}$,
$m_i=\mu_Y+\frac{\Osy}{A_i}(\hat\tau_{S,i}-\mu_S)$, and
$v_i=\Oyy-\Osy^2/A_i$ (the heterogeneous-noise analogue of the gate-time residual variance $v$), so that
$\hat\tau_{Y,i}\mid\hat\tau_{S,i},V_i\sim N(m_i,\;v_i+\sigma^2_{Y,i})$
with the per-intervention eval noise entering both the predictor and its
variance. Weighted least squares of $\hat\tau_{Y,i}$ on $(1,m_i)$ with
weights $w_i=(v_i+\sigma^2_{Y,i})^{-1}$, design matrix $M$, and weight matrix
$W=\mathrm{diag}(w_i)$ yields coefficients
$\hat\beta_{\mathrm{cal}}=(M^{\top}WM)^{-1}M^{\top}W\hat\tau_Y$ that, conditional on
$M$ and the observed $V_i$, are exactly normal with mean $(0,1)^{\top}$ and
covariance $(M^{\top}WM)^{-1}$ under the null of calibrated predictions. With
known first-stage parameters, the Wald quadratic form
\[
(\hat\beta_{\mathrm{cal}}-(0,1)^{\top})^{\top}(M^{\top}WM)(\hat\beta_{\mathrm{cal}}-(0,1)^{\top})
\]
follows $\chi^2_2$. With plug-in
$(\hat\mu,\hat\Om)$ from a disjoint training split, first-stage estimation
error inflates the variance, and the trial-level bootstrap over the training
split delivers the reference distribution. \hfill$\square$

\section{Simulation Details}
\label{app:simdetails}
The notebooks \texttt{code/simulations.ipynb} and \texttt{code/upworthy.ipynb} reproduce every number in the paper from fixed seeds. Quantities with a closed form or a one-dimensional integral are computed exactly. Monte Carlo is used only where the paper reports a simulation experiment. The simulation notebook draws all randomness from one master seed (20261004) split into independent streams, and every parameter is documented in the cell that uses it. The gated-corpus example of Appendix~\ref{sec:trunc} is computed exactly from the moments of the truncated bivariate normal.

\end{document}